\providecommand{\keywords}[1]{\textbf{\textit{Index terms---}} #1}
\newtheorem{prop}{Proposition}
\newtheorem{lemma}{Lemma}
\theoremstyle{definition}
\DeclareMathOperator*{\argmax}{argmax}
\begin{document}

\bibliographystyle{IEEEtran}
\graphicspath{{figures/}}

\title{ Energy Efficient Cooperative Network Coding with Joint Relay Scheduling and Power Allocation}
\author{
        Nan Qi, 
        Ming Xiao, \IEEEmembership{Senior Member,~IEEE,}
        Theodoros A. Tsiftsis, ~\IEEEmembership{Senior Member,~IEEE,}\\
        Mikael Skoglund, \IEEEmembership{Senior Member,~IEEE,}
       Phuong L. Cao,~\IEEEmembership{Student Member,~IEEE,}
        and Lixin Li,~\IEEEmembership{Member,~IEEE}
\thanks{
Nan Qi and Lixin Li are with Northwestern Polytechnical University, China (e-mail: naq@kth.se, lilixin@nwpu.edu.cn).

Ming Xiao, Mikael Skoglund, Phuong L. Cao are with the School of Electrical Engineering of KTH, Royal Institute of Technology, Stockholm, Sweden (e-mail: \{mingx, skoglund, plcao\}@kth.se). Nan Qi is also with KTH.

Theodoros A. Tsiftsis is with the School of Engineering, Nazarbayev University, Kazakhstan (e-mail:  theodoros.tsiftsis@nu.edu.kz).

This work was supported in part by  Fundamental Research Funds for the Central Universities (No.3102014JCQ01052), EU Marie Curie Project, QUICK, the China Scholarship Council (CSC),  China Postdoctoral Science Foundation (No.2014M552489), Postdoctoral research project funded by Shaanxi Province, Natural Science Basic Research Plan in Shaanxi Province of China (No.2016JM6062), Aerospace Science and Technology Innovation Fund of China Aerospace Science and Technology Corporation and the Doctorate Foundation of Northwestern Polytechnical University (CX201317).
}

}
\maketitle
\vspace{-1.5cm}
\begin{abstract}
The energy efficiency (EE) of a multi-user multi-relay  system with the maximum diversity network coding (MDNC) is studied. We explicitly find the connection among the outage probability, energy consumption and EE and formulate the maximizing EE problem under the outage probability constraints. Relay scheduling (RS) and power allocation (PA) are applied to schedule the relay states (transmitting,
sleeping, \emph{etc}) and optimize the transmitting power under the practical channel and power consumption models. Since the optimization problem is NP-hard, to reduce computational complexity, the outage probability is first tightly approximated to a log-convex form. Further, the EE is converted into a subtractive form based on the fractional programming. Then a convex mixed-integer nonlinear problem (MINLP) is eventually obtained. With a generalized outer approximation (GOA) algorithm, RS and PA are solved in an  iterative manner. The Pareto-optimal curves between the EE and the target outage probability show the EE gains from PA and RS. Moreover, by comparing with the no network coding (NoNC) scenario, we conclude that with the same number of relays,
MDNC can lead to EE gains.  However, if RS is implemented, NoNC can outperform
MDNC in terms of the EE when more relays are needed in the MDNC scheme.

%\begin{keywords}
% Network coding, energy efficiency-outage probability tradeoff, relay scheduling, power allocation, generalized outer approximation (GOA).
%\end{keywords}
\end{abstract}

\keywords{Energy efficiency-outage probability tradeoff, network coding, relay scheduling, power allocation, generalized outer approximation (GOA).}

\IEEEpeerreviewmaketitle

\section{Introduction}
\subsection{Related Works and Motivation}

Energy management for the energy constrained networks continues to be a challenging issue and has attracted considerable research interest recently (see \cite{3} and references therein). To assess how efficiently the energy is managed, energy efficiency (EE) has been proposed and it is defined as the sum of successfully transmitted bits per unit energy \cite{5}. As a single metric, it conveniently combines the quality-of-service parameter (e.g., throughput, bit error rate and outage probability) with energy consumption together and continues to be a popular topic in recent years \cite{5}, \cite{10}, \cite{9}-\cite{13}.

Network coding (NC) in the physical layer is a commonly used technique to improve the energy efficiency. With NC, the messages of different users are combined into new codewords at the intermediate nodes \cite{10}-\cite{18}. Recently, there are a significant amount of works on the physical layer network coding for the multi-access networks. More specifically, based on binary network coding, classical butterfly network and the two-way relay scheme  have been widely studied \cite{10}-\cite{binL}. Relay selection was conducted to maximize the energy efficiency (EE) \cite{10}, to minimize the bit error rate \cite{binT}, \cite{binY} and to maximize the throughput \cite{binM}, \cite{binL}. To enable more users to cooperate, analog network coding (ANC) was applied. In \cite{ANCM}, the packet flow and the channel occupation frames were both  scheduled to improve the network throughput. In \cite{25s}, the EE-maximization problem for ANC-based two-way relay (TWR)  networks has been tackled. In \cite{analognc}, the authors have shown that when the number of sources/relays or the modulation order increases, ANC may be more energy efficient than the conventional time-orthogonal non-cooperative  transmission scheme (referred to as No Network Coding scenario (NoNC)). However, above network coding schemes were suboptimal in terms of reliability, especially in high signal-to-noise ratio (SNR) regions resulting in a suboptimal diversity order. To achieve the full diversity order for a group of cooperative users, the idea of the maximum diversity network coding (MDNC) has been proposed in \cite{19} and \cite{20}. It was shown that, for both orthogonal and non-orthogonal channels, an $M$-user $N$-relay network based on MDNC can achieve a diversity of $N-M+1$ if direct source-BS channels are absent, while the  analog network coding \cite{25s} cannot in general \cite{19}. It was also proved in \cite{19} that MDNC can provide the network with a larger outage capacity than ANC in the high SNR region. Further, given fixed data transmitting power and noise power, reference \cite{17} illustrated that an increasing Galois field size can bring energy saving. {In \cite{19} and \cite{20}, the authors studied only the diversity order (i.e., the exponent of SNR in the upper bound) of the MDNC networks}.

Apart from energy efficient network coding, resource allocation strategies to improve the EE mainly include the following: maximizing rates without increasing the energy costs \cite{10}, \cite{binT}, minimizing the outage probability or energy costs to indirectly improve the EE \cite{9}, \cite{14} and optimizing energy aware infrastructure including schedule the modes of relays and base stations (BSs) to decrease the electrical energy cost \cite{14}, \cite{25}-\cite{energysaving2}. In particular, to maximize the EE, there is a tradeoff between decreasing the electrical circuit energy consumption by utilizing fewer relays and decreasing the outage probability along with transmission power by employing more relays to improve the diversity order. This naturally raises the problem of \emph{optimum relay selection} problem. In particular, there are many articles focusing on relay selection rules  \cite{JKKKJL}, \cite{14}, \cite{AS2}-\cite{25} and they differ in network setups (single- \cite{JKKKJL}, \cite{bobai}, \cite{25}, \cite{13} vs. multi-relay and single source-destination pair \cite{14}), relay protocol (with GF(2) \cite{binT}, \cite{binY}, \cite{25s}, ANC \cite{analognc} vs. without network coding (e.g., Amplify-and-Forward (AF) or Decode-and-Forward (DF)) \cite{JKKKJL}, \cite{14}, \cite{bobai}-\cite{13}) and channel conditions (CSI available \cite{VSK}-\cite{13} vs. unavailable at the transmitters \cite{binM}, \cite{14}, \cite{VSK}, \cite{JMJGCE} ).
Different relay selection methods have been adopted for the networks with or without CSIT. For the network with CSIT, the relay was selected according to the average CSI including the path-loss, transmission distance \cite{AS2}, instantaneous SNR \cite{25} or fading states of links \cite{14}. For the network without CSIT, a (max-)min-max criterion to select one single relay such that the BER \cite{binY} or the outage probability \cite{binM}, \cite{VSK}, \cite{JMJGCE} of the worst user/channel was optimized. To maximize the throughput, reference \cite{binT} proposed a relay selection criterion that decides according to the weighted rate sum for any bidirectional rate pair on the boundary of the achievable rate region individually. {Relay selection strategies in \cite{14}-\cite{25} are only valid for the networks where a selected relay is only connected with one unique source-destination pair. In their studied systems, the source message can be recovered as long as any relay manages to forward its message to the destination. However, in the networks that are coded over Galois Field (mostly nonbinary), one relay is connected with multiple sources. Additionally, the destination must obtain enough number of codewords  from a group of selected relays to jointly recover multiple source messages; otherwise, an outage event happens.} 
 {Thus, the derivation of the outage probability is different from those in published cooperative relaying protocols}, including binary network coding \cite{binT}, \cite{binY}, \cite{25s}, analog network coding \cite{analognc} and no network coding \cite{JKKKJL}, \cite{14}, \cite{25}.
As mentioned above, {since one relay is connected with multiple users, the ``quality" of one specific relay can not be measured by just two channels, i.e., user-relay and relay-destination channels. Additionally, in contrast to the published relaying protocols in \cite{14}-\cite{25}, relays work cooperatively rather than separately in our MDNC scheme.  We have to select an optimum relay group such that the messages of a specific user group can be jointly recovered.}
Though \cite{14}  selected a group of relays to minimize the total consumed energy for a source-destination pair, it solved the relay selection by determining the optimum SNR threshold in an exhaustive search way.  {The selection of the optimum relay group is still an open problem for networks that are coded over Galois Field, especially when CSI is not available at transmitters and  inter-channels do not necessarily follow identical distribution (non-i.d.)}. %This is partly due to the fact that the optimization problem seems formidable at a first glance at the outage probability and energy efficiency expression.    
% {Different from the published technical literature \cite{14}-\cite{25}, a specific criterion or closed-form solutions can hardly be exploited for relay selection}.

Furthermore,  as we will show in Section V-D with more relays being selected, the diversity order can be increased and thus the energy consumption for the data transmission  can be lowered; however,  the electrical circuit energy consumption increases. This may decrease the energy efficiency. {In other words, a tradeoff  exists between the  diversity order and energy efficiency}. Thus  the circuit energy consumption is not negligible and should be taken into account \cite{3}. One effective way to reduce the energy costs is to adopt the sleeping/off-based mechanism at relays and BSs \cite{energysaving}, \cite{energysaving2}. That is, the selected relays and BSs are scheduled to be in receiving/transmitting/sleeping modes when necessary, while the unselected relays stay at the off mode during the whole transmission.

%we note that \cite{19} and \cite{20} only considered the transmission energy. Yet,

%Moreover, for the network with a large number of nodes, the energy overheads to obtain the CSI may be significant and affects the network performance \cite{25}. Sometimes, CSI may not be available at the transmitter due to the lack of feedback channels or hardware constraints. It is necessary to study the RS and PA for the network without CSIT.

\subsection{Our Contributions}

We study the EE of MDNC based multi-user and multi-relay (MUMR) networks without CSIT.
We aim at maximizing the EE  meanwhile satisfying the outage probability constraint. Specifically, our main contributions are listed as follows:

\begin{description}

\item[(1)] {We explicitly find the connection among the outage probability, energy consumption and EE for the networks that are coded over Galois Field and formulate the maximizing EE problem.
    A joint power allocation (PA) and relay scheduling (RS) is conducted to solve the EE-maximization problem with satisfying the outage probability constraint. We note that the channel model is practical and general in the sense that the inter-channels are non-i.d. and the path-loss related to the transmission distance is also incorporated. All these lead to a completely new mathematical problem from the current state of arts}.

\item[(2)] {In our problem, the exact expression of the network outage probability is shown to be not tractable mathematically. To solve the optimization problem efficiently, outage probability is approximated by its log-convex form. Numerical results in Section V indicate that our approximation is sufficiently tight}.

\item[(3)] {EE is transformed into its subtractive-form based on the fractional programming theory. We transform our original problem into a convex mixed-integer nonlinear one, for which a generalized outer approximation (GOA) algorithm \cite{RFSL} is applied to decompose RS and PA such that they can be solved in an iterative manner. To our best knowledge, this is the first time when the GOA algorithm is used in similar network setups to efficiently determine an optimal relay group.}

\item[(4)] We provide an EE comparison between MDNC and NoNC. {It is shown that with the same number of selected relays,
MDNC can lead to EE gains.  However, if RS is implemented, NoNC can outperform
MDNC in terms of the EE when more relays are needed in the MDNC scheme.} Additionally, the impacts of relay number and  locations are also investigated. All these can give valuable insights on the system design of future wireless networks.

\end{description}

The rest of the paper is organized as follows. In Section II, we present the system model. Problem formulation is given in Section III. In Section IV, an approximated-optimal RS and PA algorithm is proposed. The convergence and complexity analyses are also given. The analytical and simulation results are presented in Section V. Section VI concludes this paper.

\section{SYSTEM MODEL}\label{sectionmodel}

We consider a common scenario in wireless networks, where $M$  users intend to transmit their independent messages {(i.e., a sequence of 0-1 bits)} to a BS with the assistance of $N$ micro relays. We make the following assumptions: (i) Every user has one message to be transmitted and all messages are of the same length\footnote{We note that this assumption is made for simplifying illustration. The system model can be  extended to general cases where different users may have different message lengths. {More specifically, if different users have different message lengths, we can divide the messages into shorter ones such that the lengths of shorter messages are the same and some users have more messages while some have fewer messages. Then, the users with fewer messages may not participate  in  all transmission rounds.}}; (ii) Without loss of generality, we assume that all the users and relays transmit information { with a fixed rate $\alpha_0$ bits per second (bps)} on every channel\footnote{This assumption is also made for simplifying illustration. Our model and algorithm are also applicable for different fixed rates on different channels. {The rates affect the values of the data transmitting time and  outage probability. However, different rates have no impact in the convexity of $\mathbf{P2}$ presented in Section IV. Hence, the analysis and proposed scheme are still feasible.} }; (iii) There is no direct connection between users and the BS due to the long distance or the presence of physical obstacles.

\subsection{{Transmission Scheme}}

We assume that all nodes operate in non-overlapping time slots adopted in \cite{17}, \cite{14} and \cite{25}. Let $U_i$, $i\in \{1, 2, \cdots, M\}$, and $R_j$, $j\in \{1, 2, \cdots, N\}$, respectively represent the $i$th user and $j$th relay. The whole transmission consists of two hops.
\begin{description}
  \item[\textit{1)\;The First Hop: User-relay Transmission}]
\end{description}
\begin{figure}[h]
\centering
\includegraphics[width=0.38\textwidth]{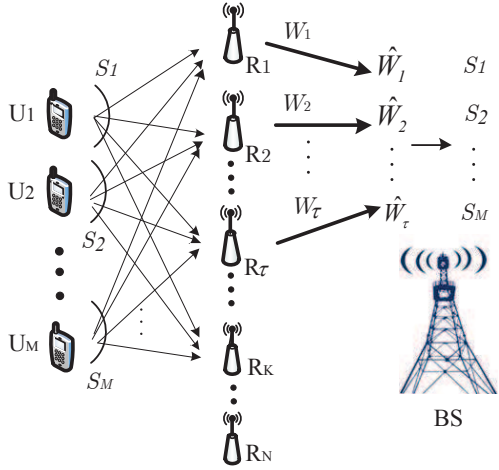}
\caption{$M$ users - $N$ relays - one BS network with MDNC.}\label{fig1}
\end{figure}

{The source message ${S_{i}}$ ($\forall i$) is first protected by channel coding and converted into a unit power-channel codeword, denoted as   
 $X({S_i})$ ($\forall i$).  In $U_i$'s broadcasting phase, $R_j$ receives the channel codeword $X({S_i})$ as follows}:
\begin{equation}
  Y_{i,j}= {h _{ij}}\sqrt {{p_i}} X({S_i}) + z_{ij}, \label{eq1}
\end{equation}
where ${h_{ij}}\sim \mathcal{CN}(0, {d_{ij}^{ - n_{ij}}\sigma _{{h_{ij}}}^2})$ is the channel gain, which combines path-loss and Rayleigh fading.
$d_{ij}^{ - n_{ij}}$ and $\sigma _{{h_{ij}}}^2$ denote the path loss and variance of the Rayleigh distribution, respectively; ${d_{ij}}$ is the distance; $n_{ij}$ is the channel path loss exponent;
%and $\sigma _{{h_{ij}}}^2$ is a variance in the Rayleigh distribution.
$p_i$ is the transmitting power of $X({S_i})$; $z_{ij} \sim {\mathcal{N}}(0,N _{0,ij}B)$ denotes the additive Gaussian white noise; $N _{0,ij}$ is the one-sided power spectral density of white Gaussian noise and $B$ is the bandwidth.

The achievable rate for the channel between $U_i$ and $R_j$ is given by
\begin{align}
C_{ij}= B{\log _2}(1 + \frac{|{h_{ij}}{|^2}p_i}{{N_{0,ij}}B}), \label{eq0}
\end{align}
where $|{h_{ij}}|$ is the amplitude of ${h_{ij}}$.

Since all the channel codewords are of the same length (with the same length, i.e., $\alpha_0T$ bits) and transmitted with the same rate, the transmission of every channel codeword in the first hop will take  one time slot with the length of ${T}$ seconds. Correspondingly, every relay will receive all the $M$ channel codewords in $M{T}$ seconds.

After user transmission, the selected relay, e.g., $R_{j}$, will  decode $Y_{i,j}$, $\forall i \in \{1,2,\cdots,M\}$ and recover $S_{i}$. An outage event occurs in the channel between $U_i$ and $R_j$ when {the fixed data transmission rate, denoted as $\alpha_0$, is larger than the achievable rate $C_{ij}$.}

\begin{description}
  \item[\textit{2)\;The Second Hop:  Relay-BS Transmission}]
\end{description}

The following notations will be used in our description of the transmission process.

$u_j$:  It is defined as
\begin{equation}
{u_{j}}=\left\{\begin{array}{ll}
1, & \mathrm{R_j \; is \; selected},\nonumber\\
0, & \mathrm{otherwise.}\nonumber
\end{array} \right.
\end{equation}
We have ${\sum\limits_{j = 1}^N {{u_j}} }=||{\mathcal U}||_1$, where ${{{\mathcal{U}}}} = \{{{u_i}}, \forall i\}$ and $||\cdot|{|_1}$  is $1$-norm operation, which equals the number of ``$1$" in $\mathcal{U}$ and correspondingly represents the number of selec  ted relays.

${\Theta}$:  The index set of the selected relays with cardinality $||\mathcal U||_1$.

${{{\Phi _{K}}}}$:  Suppose $K$ relays succeed in {receiving $M$ channel codewords} (i.e., $X({S_1})$,  $X({S_2})$, $\cdots$,  $X({S_M})$) and  recovering all the users' messages (i.e., ${S_1}$,  ${S_2}$, $\cdots$, ${S_M}$)  by channel decoding in the first hop. The index set of these $K$ relays is collected into ${{{\Phi _{K}}}}$.

$\psi_{\tau}$: The index set of $\tau$ relays, which forward messages to the BS in the second hop.  ${\tau}=0$ means that no relay forwards messages to the BS.

Clearly, $\psi_{\tau} \subseteq {{{\Phi _{K}}}} \subseteq {\Theta}$.

As shown in Fig.~\ref{fig1}, if $R_j$ fails to decode any of the channel codewords, it will not forward signals but remains in the transmitting mode and  ready for forwarding the next decoded message\footnote{We note that if $R_j$, $\forall j \in {{{\Phi _{K}}}}$, decodes a part of the source messages before forwarding them to the BS, the user messages may be recovered at the BS. Under our relaying protocol, we obtain the upper bound of the outage probability and the corresponding lower bound of the EE.} %We expect that the EE gains from the MDNC scheme can be higher if relays also forward a part of messages. However, since current forwarding scheme already achieves the highest diversity order and higher EE, the extra EE gains by forwarding partial messages will not be asymptotical.
. Otherwise, if it can decode all $M$ channel codewords (i.e., $X({S_1})$,  $X({S_2})$, $\cdots$,  $X({S_M})$) and  recover all the users' messages (i.e., ${S_1}$,  ${S_2}$, $\cdots$, ${S_M}$)  by channel decoding, MDNC  will be applied. A network codeword ${W_{j}}$ ($\forall j \in {{{\Phi _{K}}}}$) is generated at $R_{j}$ by the linear combination of $M$ source messages over a finite field GF($Q$), where $Q$ is the alphabet size. That is,
\begin{equation}
{W_{j}} = \sum\limits_{i = 1}^M{e_{i,j}}{S_i}, \forall j \in {{{\Phi _{K}}}},
 \label{wj} \end{equation}
where $e_{i,j}$ is the global encoding kernel for $S_{i}$ at relay $R_{j}$. $e_{i,j}$  converts $\log_2(Q)$  bits of one source message into one symbol. And it constitutes the transfer matrix ${\mathbf{H}_{||{\cal U}||_1 \times M}} $ corresponding to MDNC
\begin{equation}
{\mathbf{H}_{||{\cal U}||_1 \times M}} = \left( {\begin{array}{*{20}{c}}
{{e_{1,1}}}&{{e_{2,1}}}& \ldots &{{e_{M,1}}}\\
{{e_{1,2}}}&{{e_{2,2}}}& \ldots &{{e_{M,2}}}\\
{}& \ldots & \ldots &{}\\
{{e_{1, ||{\cal U}||_1}}}&{{e_{2,||{\cal U}||_1}}}& \ldots &{{e_{M,||{\cal U}||_1}}}
\end{array}} \right).
\nonumber \end{equation}
${\mathbf{H}_{||{\cal U}||_1 \times M}}$ has a rank $M$ for any $M$ columns \cite{19}. 

{${W_{j}}$ ($\forall j$) is a symbol sequence. One symbol can take any value in $\{ 0, 1, 2, ..., Q-1\}$.  Then $R_j$ regards ${W_{j}}$ as a sequence of information symbols and produces a channel codeword, $X({W_{j}})$,  which will be further forwarded to the BS}. 

{We note $X({W_j})$ itself is a unit-power channel codeword. Let ${{{p'_{j}}}}$ denote the transmitting power of $X({W_j})$}.  
 {With network coding, $X({W_{j}})$ and $X({S_{i}})$ ($\forall i, j$) have the same number of bits, i.e., $\alpha_0T$ bits (interested readers can refer to \cite{19}, \cite{20} for detailed illustration)}. Correspondingly, every transmission in the second hop uses $T$ seconds. 
 
Suppose $\tau$ relays manage to forward their codewords to the BS, as shown in Fig.~\ref{fig1}. The received channel codeword at the BS can be written as
\begin{align}
  {\hat W_j}={g _{j}}\sqrt {{{p'_{j}}}} X({W_j}) + z_{j}, \forall j \in {\psi_{\tau}}, \label{eq5}
\end{align}
where ${g_{j}}\sim \mathcal{CN}(0, {d_{j}^{ - n_{j}}\sigma _{{g_{j}}}^2})$ is the channel gain;
$d_{j}^{ - n_{j}}$ and $\sigma _{{g_{j}}}^2$ denote the path loss and variance of the Rayleigh distribution, respectively; ${d_{j}}$ is the distance; $n_{j}$ is the channel path loss exponent; $z_{j} \sim {\mathcal{N}}(0,{N_{0,j}}B)$ is the noise term and ${N_{0,j}}$ is the power spectral density of noise.

{After that, as depicted in Fig. \ref{fig1}, by conducting channel decoding at the BS, $\{{\hat W_j}, \forall j \in \psi _\tau \}$ are decoded into $\{W_j, \forall j \in \psi _\tau  \}$. Finally, the BS tries to obtain the source messages (i.e., ${S_1}$, ${S_2}$,  $\cdots$, ${S_M}$) jointly from $\{W_j, \forall j \in \psi _\tau\}$ by network decoding}.

\subsection{Power Consumption Model}

%In the transmission period of the first hop, the selected relays are in the receiving mode while the other relays shut down and the BS is in the sleeping mode to reduce energy consumption. In the transmission period of the  second hop, the BS turns into the receiving mode while all the selected relays are in the transmitting mode, while the other relays will  be in the sleeping mode during both hops.
The power consumption at sources, relays and  BS in different modes are listed as follows:

\subsubsection{Power Consumption at Users}

 $p_i$ is upper bounded by the battery capacity $P_{S,max}$, i.e., $p_i\leq P_{S,max}$, $\forall i \in \{1,2,\cdots,M\}$. We denote the set of ${{p}_i}, \forall i$, as ${{\mathcal{P}}_{{S}}}$.

\subsubsection{Power Consumption at Relays}

The power consumptions  at the relays in sleeping, receiving and transmitting modes is described as follows \cite{3}:

\begin{small}
\begin{equation}
{P_{R}}=\left\{\begin{array}{ll}
{P_{0,R}} , & \mathrm{receiving\,\,mode,}\\
{P_{0,R}} + {\vartriangle _P}{{p'_{j}}}, & \mathrm{transmitting\,\,mode,}\\
{P_{sleep,R}}, & \mathrm{sleeping\,\,mode,} \label{powerR}
\end{array} \right.
\end{equation}
\end{small}
where ${\vartriangle _P}$ is the slope of the load-dependent power consumption. Generally, we have ${P_{0,R}}>{P_{sleep,R}}.$  {We note that the power used by the network coding is also included in ${P_{0,R}}$}. We assume ${{p'_{j}}}$ is upper bounded by $P_{R,max}$, i.e., ${{p'_{j}}} \leq P_{R,max}$. The set of ${{p'_{j}}}, \forall j$, is denoted as ${{\mathcal{P}}_{{MDNC}}}$.

\subsubsection{Power Consumption at the BS}

Different from the relays, the BS will never turn off since the restarting, reconfiguring and reloading are usually time consuming procedures. The power of the BS in receiving and sleeping  modes is given as
\begin{small}
\begin{equation}
{P_{BS}}=\left\{\begin{array}{ll}
{P_{sleep,{BS}}}, & \mathrm{sleeping\,\,mode\,\,in\,\,the\,\,first\,\,hop,}  \\
{P_{0,BS}} . & \mathrm{receiving\,\,mode\,\,in\,\,the\,\,second\,\,hop.}  \label{powerBS}
\end{array} \right.
\end{equation}
\end{small}
In general, we have ${P_{0,BS}}>{P_{sleep,BS}}.$ {We note that the power consumed for network decoding is also included in ${P_{0,BS}}$}.

\section{Problem Formulation}

\subsection{Problem Formulation}
For the network described in Section \ref{sectionmodel}, we aim at maximizing the EE by optimizing the data transmission power and scheduling the relay modes under the outage probability constraint.
{The formulation of the optimization problem can be formulated as}
\begin{align}{}
&\mathbf{P1:}\;\; [\mathcal{U}^*,\mathcal {P}_{S}^*,\mathcal {P}_{MDNC}^*] = \argmax_{[{{\mathcal{U }}},{{\mathcal{P}}_{S}},{{\mathcal{P}}_{{{MDNC}}}}]} {\eta _{EE}}  \label{eqob2} \hfill \\
& s.t. \nonumber\\
&\quad \quad {u_{j}}\in \{0,1\}, \forall j \in \{1,2,\cdots, N\},\label{eq31} \\
&\quad \quad u_j=I_{(0 ,P_{R,max}]}({{p'_{j}}})\mathop = \limits^\Delta\left\{ \begin{gathered}
  1\quad {p'_{j}} \in ({0},{P_{R,max}}], \hfill \\
  0\quad \textrm{otherwise},\quad  \hfill \\
  \end{gathered} \right. \label{eq29}\\
&\quad \quad {\Pr}_{out} \leq {{\Pr}_{out,target}},\label{eq41}\\
&\quad\quad {E_{BS,1}}+{E_{BS,2}}+{E_{R,1}}+{E_{R,2}}\leq{E_0}, \label{eq27}\hfill\\
&\quad\quad 0<{p_{i}}\leq{P_{S,\max}}, \forall i \in  \{1,2,\cdots, M\},\label{eq282} \hfill\\
&\quad\quad  0\leq{p'_{j}}\leq{P_{R,\max}}, \forall j \in  \{1,2,\cdots, N\},\label{eq28}
\end{align}
where $(\cdot)^\ast$ represents the optimum solution. $I(\cdot)$ is an indicator function. Inequation \eqref{eq29} describes the relationship between $u_j$ and ${{p'_{j}}}$, which indicates that if a relay is not selected, no power is allocated for data transmission.
Inequation \eqref{eq41} requires that the outage probability, denoted as ${{\Pr}_{out}}$ should be lower than the target level ${{\Pr}_{out,target}}$. Inequation  \eqref{eq27} means that the total energy at the BS (suppose ${E_{BS,1}}$ and ${E_{BS,2}}$ joules energy are respectively consumed in the first and second hop at the BS) and relays (suppose ${E_{R,1}}$ and ${E_{R,2}}$  joules energy are consumed in the first and second hop at the relays, respectively) should be less than the total available energy, $E_0$, which varies with the power supply condition in the power grid. It limits the maximum number of selected relays that the network can support.

In what follows, we give the detailed formulation for ${\eta _{EE}}$, ${{\Pr}_{out}}$, ${E_{BS,1}}$, ${E_{BS,2}}$, ${E_{R,1}}$ and ${E_{R,2}}$, respectively.

\subsection{Energy Efficiency}
The EE is evaluated as the expected number of successfully transmitted information bits $L$ divided by the total consumed energy ${E_{tot}}$ in the first and second hop  {\cite{5}, \cite{17}}, i.e.,
\begin{align}
{\eta _{EE}} &= \frac{L}{{{E_{tot}}}}.\label{EEMDNC0}
\end{align}
Since the network decoder either recovers all source messages or does not recover any message, the outage probability for all users in the MDNC scenario is the same. Thus, $L$ can be expressed as   
\begin{align}
L&=M{{\alpha_0}{T}(1-{{\Pr}_{out}})},
\end{align}
where ${{\Pr}_{out}}$ is the outage probability ({ similar definitions of the expected number of bits/throughput can be found in  \cite{bobai}, \cite{EVB}, \cite{MKSS})}.
Then \eqref{EEMDNC0} can be rewritten as
\begin{align}
\eta _{EE} &=\frac {M{{\alpha_0}{T}(1-{{\Pr}_{out}})}}{{{E_{tot}}}}.\label{EEMDNC}
\end{align}

\subsection{Total Consumed Energy}
 {We note that $E_{tot}$ takes into account all the energy consumption factors, including the states of relays/BS and allocated transmitting power. We first give its expression in \eqref{eq15}. The detailed illustrations are given as follows.}

In the first hop, $U_i$ consumes ${p_i}{T}$ joules to broadcast its message to the $||{\cal U}||_1$ selected relays. Every selected relay enters the receiving mode and consumes ${P_{0,R}}M{T}$ joules  in $M{T}$ seconds, while the remaining relays shut down to save energy. All the users and relays respectively consume $E_S$ and $E_{R,1}$ joules in the first hop. At the same time, the BS stays at the sleeping mode for $M{T}$ seconds and consumes $E_{BS,1}$ joules. $E_S$, $E_{R,1}$ and $E_{BS,1}$ are shown in \eqref{eq15}.
\begin{figure}[h]
\centering
\includegraphics[width=0.4\textwidth]{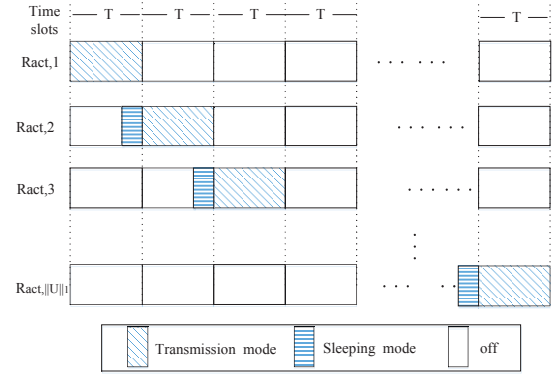}
\caption{A timing diagram for relaying transmission process and mode transition in the second hop for the MDNC scheme.}
\centering\label{transmission_NoNC}
\end{figure}

 In the second hop, the states of all relays are rescheduled. As shown in Fig. \ref{transmission_NoNC}, the selected relays are denoted as $R_{act,1}$, $R_{act,2}$, ...,$R_{act,||{\cal U}||_1}$, which  will forward their respective  codewords in a round-robin fashion. The selected relay, e.g., $R_j$  will consume $({P_{0,R}} + {\vartriangle_P}{{p'_{j}}}){T}$ joules energy to forward its codeword. Note that relays cannot enter the transmitting mode immediately from the off mode. Thus, to make the relay ready for immediately switching into the transmitting mode, it is necessary to first stay at a sleeping mode,  which acts as a transitional stage as shown in Fig. \ref{transmission_NoNC}. We assume the sleeping period lasts for $\beta T$ seconds, where $\beta$ is a constant and limited between $0$ and $1$. Specifically, during the last $\beta T$ seconds of one selected relay's transmission period, the next selected relay in the round-robin queue will be in the sleeping mode until its turn to be active and forward its codeword. Meanwhile, the other relays are off to reduce energy consumption and avoid interference. Thus, except the first selected relay (i.e., $R_{act,1}$), any other selected relay  will be in the sleeping mode for $\beta T$ seconds and consumes $\beta TP_{sleep,R}$ joules. The BS will be in the receiving mode in $||{\mathcal U}||_1$ time slots. In total, $E_{R,2}$ and ${{E_{BS,2}}}$ joules will be consumed respectively by relays and the BS as shown in \eqref{eq15}. 
 
\begin{figure*}[ht]
\begin{flalign}
{E_{tot}}&={E_S}+{E_R}+{E_{BS}}& \hfill\nonumber\\
&=\overbrace {\underbrace {\sum\limits_{i = 1}^M {{p_i}{T}} }_{{E_S}} + \underbrace {\sum\limits_{j = 1}^N {{u_j}{P_{0,R}}M{T}} }_{{E_{R,1}}} + \underbrace{{P_{sleep,BS}}M{T}}_ {{{E_{BS,1}}}} }^{1st\;hop} + \overbrace {\underbrace {\sum\limits_{j = 1}^N  {{u_j}}({P_{0,R}} + {\vartriangle_P}{{p'_{j}}}){T}+(\sum\limits_{j = 1}^N {{u_j}}  - 1){P_{sleep,R}}\beta {T}}_{{E_{R,2}}} + \underbrace {{P_{0,BS}}\sum\limits_{j = 1}^N {{u_j}}{T}}_{{E_{BS,2}}}}^{2nd\;hop}.& \label{eq15}
\end{flalign}
\hrulefill
\end{figure*}
\subsection{Outage Probability}
Since MDNC is used, any $M$ out of $\tau$ network codewords can be used to recover the $M$ user messages at the BS. An outage event happens when fewer than $M$ network codewords are received by the BS, i.e., $\tau \leq M$ \cite{19}. Thus, it is required that in every transmission, the number of transmitters should be smaller than the number of relays. If the  number of users exceeds $N$, we will split the users into several groups and transmission proceeds group by group. For convenience, we assume $M\leq N$.

Suppose $K$ relays  succeed in receiving all the source messages. As illustrated above, an outage event happens in the following two cases in terms of $K$.

\begin{description}
  \item[1. {Case} {$\mathcal{{A}}$:\;{$K< M$}}]
\end{description}

In this case, user messages cannot be recovered no matter how the second hop proceeds. Let $\Pr{\{ \zeta _{K} \}}$ denote the probability that ${K}$ relays successfully receive all the user messages in the first hop. An outage event happens when $K <M$ with probability ${\Pr_{{out},\mathcal{A}}},$ i.e.,
\begin{flalign}
&{{\Pr}_{{out},\mathcal{A}}}=\sum\limits_{K = 0}^{M - 1} {\Pr{{\{ \zeta_{K} \} }}}. \label{eqspouta}
\end{flalign}
Let ${\rho _j}$ represent the probability that ${R_j}$ can successfully recover all $M$ user messages in the first hop. It can be represented as
\begin{align}
& {\rho _j} = \prod\limits_{i = 1}^M {(1 - {{\Pr}_{e,ij}})}, \label{eq27ss}
\end{align}
where ${{\Pr}_{e,{ij}}}$ is the outage probability of $U_i$-$R_j$ end-to-end transmission and can be calculated as
\begin{align}
{{\Pr}_{e,{ij}}}&={\Pr}\{ C_{ij}  < {\alpha_0}\}={\Pr}\{ |{h_{ij}|^2} < \frac {({2^{{{\alpha_0} \mathord{\left/
 {\vphantom {R B}} \right.
 \kern-\nulldelimiterspace} B}}} - 1){N_{0,ij}}B}{ p_i}\}. \label{eq2}
\end{align}
Since $x_{ij}=|h_{ij}|^2$ follows the exponential probability density function (PDF) \cite{14}
\begin{eqnarray}
f(x_{ij}) = \frac{1}{{\sigma _{h_{ij}}^2d_{ij}^{ - {n_{ij}}}}}\exp ( - \frac{x_{ij}}{{\sigma _{h_{ij}}^2d_{ij}^{ -{n_{ij}}}}}),\nonumber
\end{eqnarray}
\eqref{eq2} can be rewritten as
\begin{align}
  {{\Pr}_{e,ij}} &= 1 - \int_\frac {({2^{{{\alpha_0} \mathord{\left/
 {\vphantom {R B}} \right.
 \kern-\nulldelimiterspace} B}}} - 1){N_{0,ij}}B}{ p_i}^{+\infty} f(x_{ij}){d{x_{ij}}}=1 - \exp ( - \frac{{{c_{ij}}}}{{{p_i}}}), \label{eq2CSI}
\end{align}
where  $c_{ij}=\frac{{({2^{{{{\alpha_0}} \mathord{\left/
 {\vphantom {{{\alpha_0}} B}} \right.
 \kern-\nulldelimiterspace} B}}} - 1) {{N_{0,ij}}B}}}{{d_{ij}^{ - n_{ij}}{\sigma _{h_{{ij}}}^2}}}>0$. Thus, $\Pr{\{ \zeta _{K} \}}$  can be represented as
\begin{align}
\Pr \{ {\zeta _K}\}  = \sum\nolimits_{{\Phi _K}} {\left( {\prod\limits_{{j} \in {\Phi _K}} {{\rho _j}} {\prod _{{j} \in {\Theta}\backslash {\Phi _K}}}(1 - {\rho _j})} \right)}, \label{eq37}
\end{align}
where {$K$ is the number of relays succeeding in receiving and decoding all the user messages in the first hop}; $\sum\nolimits_{{\Phi _K}} {(\theta)}$ represents the sum of $\theta$ when ${{\Phi _K}}$ is in different cases. ${{\Phi _K}}$ consists of $K$ relays randomly chosen from the selected  $||{\cal U}||_1$ relays and have $C_{||{\cal U}||_1}^K$ cases.

\begin{description}
  \item[2. Case $\mathcal{B}:$ \; $K \geq M$ ]
\end{description}

In this case,  an outage event  happens when the number of relays forwarding the codewords to the BS in the second hop is smaller than $M$.

Let ${{\Pr}_{{out},\mathcal{B}}}$ denote the outage probability in case $\mathcal{B}$, given  by
\begin{align}
&{{\Pr}_{{out},\mathcal{B}}}=\sum\limits_{K = M}^{||{\cal U}||_1} \left({\Pr{{\{ \zeta_{K} \} }} \cdot \sum\limits_{\tau  = 0}^{M - 1} \Pr{\{\varsigma_{\tau} |\zeta_K \}}}\right), \label{eqspoutb}
\end{align}
where $\Pr{{\{ \zeta_{K} \} }}$ is given in \eqref{eq37};  $\Pr{\{\varsigma_{\tau} |\zeta_K \}}$ represents the probability that any combination of $\tau$ relays in ${{{\Phi _{K}}}}$ successfully transmits the user message in the second hop, which is given as
\begin{align}
\Pr \{ {\varsigma _\tau }|{\zeta _K}\}  = \sum\nolimits_{{\psi _\tau }} {\left( {\prod\limits_{{j} \in {\psi _\tau }} {(1 - {{\Pr }_{e,j}})} {\prod _{{j} \in {\Phi _K}\backslash {\psi _\tau }}}{{\Pr }_{e,j}}} \right)}, \end{align}
where ``$A \backslash B$" means $A$ is the complementary set of $B$. $\sum\nolimits_{{\psi _\tau }} {(\varpi )}$  represents the sum of $\varpi$ when ${{\psi _\tau }}$ is in different cases. Since ${{\psi _\tau}}$  is the set of $\tau$ indexes randomly chosen from ${{\Phi _K}}$, including $C_K^{\tau}$ cases. ${{\Pr}_{e,j}}$   is  the outage probability for the transmission between $R_j$ and the BS. Similar to the calculation of ${{\Pr}_{e,ij}}$,  ${{\Pr}_{e,j}}$  can be evaluated as
\begin{align}
  {{\Pr}_{e,j}} &= 1 - \exp ( - \frac{{{c_j}}}{u_j p'_j}),\label{Pej}
\end{align}
where $c_j=\frac{{({2^{{\alpha_0}/B}} - 1){N_{0,j}}B}}{{{d_{j}^{ - n_{j}}\sigma _{{g_j}}^2}}} >0 $.

Since cases $\mathcal{{A}}$ and $\mathcal{{B}}$  are mutually independent, the outage probability can be calculated as
\begin{align}
&{{\Pr}_{out}}={{\Pr}_{{out},\mathcal{A}}}+{{\Pr}_{{out},\mathcal{B}}}. \label{eq36}
\end{align}

\section{Approximated-optimal Relay Scheduling and Power Allocation}

{As can be seen from \eqref{eqspouta} $-$ \eqref{eq36}, although ${{\Pr}_{out}}$ offers the exact expression of the network outage probability, it consists of multiple exponential items. Note the coefficients of exponential items are positive and negative constants that alternately appear. 

 Additionally, due to the existence of binary variables $u_{j}$, $\forall j \in \{1,2,\cdots,N\}$,  $\mathbf{P1}$ is a mixed integer nonlinear programming (MINLP) problem \cite{CAF} that is usually NP-hard \cite{NPhard}. {Furthermore, the product form of ${{u_j}{{{p'_{j}}}}}$ in \eqref{eq15} and \eqref{eq36}} and nonlinear fractional structure of ${\eta _{EE}}$ make $\mathbf{P1}$ more complicated and challenging. All these  makes \eqref{eq41} in $\mathbf{P1}$  not tractable mathematically. 
 
 To solve it efficiently, we will make following efforts to convert it into a convex  optimization problem such that the efficient algorithm can be adopted.  Specifically, three steps will be used including converting $\Pr_{out}$ and ${\eta _{EE}}$ into convex forms and  decomposing RS and PA into two-subproblems. More detailed steps are described as follows.}

\subsection* {Step 1: Geometric Programming Approximation of ${{\Pr}_{out}}$}

{Based on above observations, we are motivated to look into  the structure of the expression for outage probability in the high SNR region and find a simplified but  tight approximation. In the sequel, we manage to convert it into a geometric programming form, which is further converted into a log-convex form}. The detailed methods are illustrated as below.

We start with the following four items, i.e., ${\rho _j}$, $(1-{\Pr}_{e,j})$, $(1-{\rho _j})$ and ${\Pr}_{e,j}$ that appear in the expression of ${{\Pr}_{out}}$. By substituting \eqref{eq2CSI} into \eqref{eq27ss}, we have
\begin{align}
&{\rho _j}=  \exp ( - \sum\limits_{i = 1}^M {{c_{ij}}/{p_i}} ) ,\; 1-{\Pr}_{e,j}=\exp ( - {c_{j}}/({{u_j}{p'_{j}}})).
\end{align}
For high  SNR regions, i.e., when ${{c_{ij}}/{p_i}} \to 0, {c_{j}}/({{u_j}{p'_{j}}}) \to 0$, we have  ${\exp({ - {{c_{ij}}/{p_i}}}})$ $\sim 1$ and $\exp ( - {c_{j}}/({{u_j}{p'_{j}}}))\sim 1$. Furthermore, according to \eqref{eq27ss} and \eqref{Pej}, we have
\begin{align}
{\rho _j}\sim 1, (1-{\Pr}_{e,j}) \sim 1. \label{Poutapp}
\end{align}
Moreover, since $\mathop {\lim }\limits_{x \to 0} (1 - \exp ( - x)) = x$, we have
\begin{align}
& 1 - {\rho _j}=1-\exp ( - \sum\limits_{i = 1}^M {{c_{ij}}/{p_i}} )\approx\sum\limits_{i = 1}^M {\frac{{{c_{ij}}}}{{{p_i}}}}.\label{noout1}
\end{align}

${\Pr}_{e,j}=1-\exp ( - {c_{j}}/{{u_j}{p'_{j}}}) $ should be $1$ if ${p'_{j}}=0$. However, using  the approximation $\mathop {\lim }\limits_{x \to 0} (1 - \exp ( - x)) = x$ will lead to ${\Pr}_{e,j}$ infinitely large, and thus invalid. Instead, we take the approximation  $1 - {e^{ - x}}\sim \frac{x}{{x + 1}}$  and thus,
\begin{align}
&{\Pr}_{e,j}=\frac{{{c_j}}}{{{c_j} + {{u_j}{p'_{j}}}}}.  \label{noout2}
\end{align}
Note that  ${\Pr}_{e,j}=1$ when ${u_{j}}= 0$. By substituting \eqref{Poutapp} $-$ \eqref{noout2} into \eqref{eqspouta} $-$ \eqref{eq36}, we rewrite ${{\Pr}_{out}}$ as \eqref{eq56}, which is a tight approximation of ${{\Pr}_{out}}$  in the high SNR region.
\begin{figure*}[ht]
\begin{flalign}
&{{\Pr}_{out}}\approx \sum\limits_{K = 0}^{M - 1} {\sum\nolimits_{{\Phi _K}} {\left( {\prod\limits_{{j} \in {\Theta}\backslash {\Phi _K}} {\sum\limits_{i = 1}^M {\frac{{{c_{ij}}}}{{{p_i}}}} } } \right)} } + \sum\limits_{K = M}^{||{\cal U}||_1} {\left( {(\sum\nolimits_{{\Phi _K}} {(\prod\limits_{{j} \in {\Theta}\backslash {\Phi _K}} {(\sum\limits_{i = 1}^M {\frac{{{c_{ij}}}}{{{p_i}}}} } )} )) \cdot (\sum\limits_{\tau  = 0}^{M - 1} {\sum\nolimits_{{\Phi _K}} {(\prod\limits_{{j} \in {\Phi _K}\backslash {\psi _\tau }} {\frac{{{c_j}}}{{u_j{{{p'_{j}}}}}+c_j}} } ))} } \right)}.  & \label{eq56}
\end{flalign}
\hrulefill
\end{figure*}

The multiplicative form of integer variables (i.e., ${u_j}$, $\forall j \in \{1,2,\cdots,N\}$) and real variables ${{{p'_{j}}}}$ makes $\mathbf{P1}$ non-convex/concave. To covert into convex with variable substitution, two new variables, i.e., ${{\tilde p}_i}$ and ${{\tilde p'}_j}$ are first defined as below
\begin{align}
& {p_i} = {e^{{{\tilde p}_i}}},\quad \; {{u_j}{{p'_{j}}}} = {c_j}{e^{{\tilde p'}_j}}-{c_j},\label{eq489}
\end{align}
where $({c_j}{e^{{\tilde p'}_j}}-{c_j})$ is the real transmitting power at $R_j$.

Then, \eqref{eq282}, \eqref{eq28} and \eqref{eq56} can be rewritten as \eqref{eq63}, \eqref{eq631a} and \eqref{eq222}, respectively.
%\begin{small}
\begin{align}
&0<p_i={e^{{{\tilde p}_i}}}\leq P_{S,max},\label{eq63} \\
& 0 \leq {{u_j}{p'_{j}}}={c_j}{e^{{{\tilde p'}_j}}}-{c_j}{ \leq {{u_j}{P_{R,\max }}}}.\label{eq631a}
\end{align}
%\end{small}
%\end{figure*}
\begin{figure*}[t]
\begin{flalign}
&{{\Pr}_{out}}\approx\sum\limits_{K = 0}^{M - 1} {\sum\nolimits_{{\Phi _K}} {\left( {\prod\limits_{{j} \in {\Theta }\backslash {\Phi _K}} {\sum\limits_{i = 1}^M {{c_{ij}}{e^{ - {{\tilde p}_i}}}} } } \right)} } + \sum\limits_{K = M}^{||{\cal U}||_1} {\left( {(\sum\nolimits_{{\Phi _K}} {(\prod\limits_{{j} \in {\Theta }\backslash {\Phi _K}} {(\sum\limits_{i = 1}^M {{{c_{ij}}e^{ - {{\tilde p}_i}}}} } )} )) \cdot (\sum\limits_{\tau  = 0}^{M - 1} {\sum\nolimits_{{\Phi _K}} {(\prod\limits_{{j} \in {\Phi _K}\backslash {\psi _\tau }} {{e^{ - {{\tilde p'}_j}}}} } ))} } \right)}.\label{eq222}
\end{flalign}
\hrulefill
\end{figure*}
Now, ${{\Pr}_{out}}$ is approximated to a sum form of exponential items multiplied by positive constants, which is log-convex \cite{24}.

\subsection* {Step 2: Fractional Programming for the Objective}

{Note that the objective, i.e., $\eta _{EE}$ is still in a nonlinear fractional form, which cannot be guaranteed be to convex even though we ignore the existence of the $0-1$ variables. Following Step $1$, we further convert $\eta _{EE}$ into its subtractive form by conducting fractional programming based on the fractional programming theory \cite{nonl}}, we have the following lemma for solving $\mathbf{P1}$.

\begin{lemma}  \label{Theorem1}
 Define $V=M{\alpha_0}(1-{\Pr}_{out}) - {q}{E_{tot}}$, ${{\mathcal{\tilde P}}_{{MDNC}}}=\{{{\tilde p'}_j}, \forall j\}$ and ${{\mathcal{\tilde P}}_{{S}}}=\{{\tilde {p}_i}, \forall i\}$. If  $\mathcal{U}$ is fixed, the resource allocation policy can achieve the maximum energy efficiency
\begin{align} q^*=\max {\eta_{EE}},\nonumber
\end{align}
if and only if
\begin{align}
V( q^ *) &=V( q^ *, {{\mathcal{\tilde P}}_{{S}}}^*, {{\mathcal{\tilde P^*}}_{{MDNC}}}, {{\mathcal{U}}}^* ) \nonumber\\
&= \max \{M{\alpha_0}(1-{\Pr}_{out}) - { q^ * }{E_{tot}}\} = 0, \label{non}
\end{align}
where $V( q^ *)$ is referred to as the subtractive-form of the primal objective function; $q^ *$ is the maximum energy efficiency; ${{\mathcal{\tilde P}}_{{S}}}^*$, ${{\mathcal{\tilde P^*}}_{{MDNC}}}$ and ${{\mathcal{U}}}^*$ are the optimal solutions.
\end{lemma}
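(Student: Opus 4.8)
The plan is to prove the equivalence by the standard Dinkelbach argument for nonlinear fractional programming, the only real ingredient being that the denominator $E_{tot}$ is strictly positive on the feasible region. Throughout I fix $\mathcal{U}$ (so $\mathcal{U}^*$ just denotes that fixed value) and let $\mathcal{F}$ be the feasible set of $({{\mathcal{\tilde P}}_{S}},{{\mathcal{\tilde P}}_{MDNC}})$ carved out by the rewritten constraints \eqref{eq41}, \eqref{eq27}, \eqref{eq63}, \eqref{eq631a}. First I would record two structural facts. (i) From \eqref{eq15}, $E_{tot}\ge\sum_{i=1}^{M}p_iT>0$ at every feasible point, since $p_i=e^{\tilde p_i}>0$; hence $E_{tot}$ is bounded away from $0$ and $\eta_{EE}$ is finite. (ii) In the transformed variables $M\alpha_0(1-{{\Pr}_{out}})$ and $E_{tot}$ are continuous and $\mathcal{F}$ is compact, so $\eta_{EE}$ attains its supremum $q^\ast$ at some $({{\mathcal{\tilde P}}_{S}}^{*},{{\mathcal{\tilde P}}_{MDNC}}^{*})$, i.e.\ $q^\ast=\dfrac{M\alpha_0(1-{{\Pr}_{out}}^{*})}{E_{tot}^{*}}$.

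For the ``only if'' direction I argue: by optimality of $q^\ast$, every feasible point satisfies $\dfrac{M\alpha_0(1-{{\Pr}_{out}})}{E_{tot}}\le q^\ast$; multiplying by $E_{tot}>0$ gives $M\alpha_0(1-{{\Pr}_{out}})-q^\ast E_{tot}\le 0$, i.e.\ $V(q^\ast)\le 0$ on all of $\mathcal{F}$, with equality at $({{\mathcal{\tilde P}}_{S}}^{*},{{\mathcal{\tilde P}}_{MDNC}}^{*})$ because there the ratio equals $q^\ast$ exactly. Hence $\max_{\mathcal{F}}V(q^\ast)=0$ and the maximizer of $V(q^\ast)$ coincides with the EE-optimal allocation, which is precisely \eqref{non}. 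For the ``if'' direction, suppose \eqref{non} holds, i.e.\ $\max_{\mathcal{F}}\{M\alpha_0(1-{{\Pr}_{out}})-q^\ast E_{tot}\}=0$, attained at $({{\mathcal{\tilde P}}_{S}}^{*},{{\mathcal{\tilde P}}_{MDNC}}^{*})$. Then for every feasible point $M\alpha_0(1-{{\Pr}_{out}})-q^\ast E_{tot}\le 0$; dividing by $E_{tot}>0$ yields $\eta_{EE}\le q^\ast$, while at the maximizer the bracket vanishes, so $\eta_{EE}=q^\ast$ there. Thus $q^\ast$ is the largest achievable EE and it is realized by $({{\mathcal{\tilde P}}_{S}}^{*},{{\mathcal{\tilde P}}_{MDNC}}^{*})$, which is the assertion. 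I would optionally append the monotonicity remark that $F(q):=\max_{\mathcal{F}}V(q)$ is continuous and strictly decreasing in $q$ (again because $E_{tot}>0$), with $F(q)\to+\infty$ as $q\to-\infty$ and $F(q)\to-\infty$ as $q\to+\infty$, so that \eqref{non} has a unique root; this is what legitimizes the Dinkelbach-type outer loop in the algorithm of Section~IV.

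The inequality manipulations are routine; the only point requiring care is well-posedness, namely that $E_{tot}$ is bounded away from zero (so the division is valid and $q^\ast<\infty$) and that the supremum defining $q^\ast$ is actually attained — this is what upgrades the conclusion from ``$\le 0$'' to ``$=0$'' in \eqref{non}. I expect establishing attainment, i.e.\ continuity of the approximated log-convex ${{\Pr}_{out}}$ of \eqref{eq222} together with compactness of $\mathcal{F}$ once the constraints are imposed, to be the main (though mild) obstacle; everything else follows immediately from the sign of $E_{tot}$.
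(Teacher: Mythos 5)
Your proof is correct, and it follows the same underlying route as the paper: the Dinkelbach fractional-programming equivalence with numerator $N=M\alpha_0(1-{\Pr}_{out})$ and denominator $D=E_{tot}>0$. The difference is one of self-containment. The paper's Appendix~A simply quotes Dinkelbach's theorem from the cited reference (continuity of $N$, $D$ and $D>0$ for fixed $\mathcal{U}$) and substitutes, whereas you reprove the equivalence inline by the two sign arguments (multiply/divide by $E_{tot}>0$ in each direction), which is exactly the content of the cited result. What your version buys is that it makes explicit two points the paper leaves implicit: that $E_{tot}>0$ on the feasible set (indeed this is immediate even without $p_i>0$, since \eqref{eq15} contains strictly positive circuit terms such as $P_{sleep,BS}MT$), and that the maximum of $\eta_{EE}$ is actually attained, which is what turns the one-sided bound into the equality in \eqref{non}. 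Your attainment remark is the right thing to worry about: in the transformed variables $\tilde p_i$ the box constraints alone do not give compactness ($\tilde p_i\to-\infty$ is allowed by \eqref{eq63}), and it is the outage constraint with the approximated ${\Pr}_{out}$ of \eqref{eq222} that bounds the powers away from zero and closes the feasible region; the paper does not address this, relying on the cited theorem's hypotheses. Your optional monotonicity observation on $F(q)=\max V(q)$ is likewise not in the paper's proof but is the standard justification for the iterative update of $q$ described after the lemma, so it fits naturally with Section~IV.
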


\begin{proof}
The proof is provided in Appendix A. \quad \end{proof}
According to Lemma \ref{Theorem1}, $\mathbf{P1}$ can be reformulated as finding the optimum transmitted powers at users and relays to satisfy
\begin{align}\max \{V =(M{\alpha_0}(1-{\Pr}_{out}) - {q}{E_{tot}})\}=0. \label{maxizero}\end{align}

In Dinkelbach's method, $q$ is iteratively updated \cite{nonl}. In every iteration, it solves \eqref{maxizero} with a given $q$ and then judges whether $V(q)$  converges to a given tolerance. If not, $q$ is updated and we repeat the maximization problem until it converges or reaches the maximal iterations. The details can be found in \cite{nonl}, \cite{OFDM_nonl}.
Note that with given $q$ in every iteration, we have
 \begin{align}
 \max \;V \Leftrightarrow \min \; - V \Leftrightarrow \min \;\;V' \mathop \Leftrightarrow \limits^{(a)} \min \;\;\tilde V,\label{eqrelation}
\end{align}
where  ``$\Leftrightarrow$" means the expression in the left side is equivalent to that in the right side. $V'$ and $\tilde V$ are respectively defined as
\begin{align}
V'&=-V+M{\alpha_0}+q{T}{{\Delta _P}}\sum\limits_{j = 1}^N {{c_j}},\label{Vprime}\\
\tilde V& = \log(V')\label{tildeV}.\end{align}
$(a)$ satisfies with the fact that $V'>0$.

By taking log for both sides of \eqref{eq41}, we have $\mathbf{P2}$ shown below.
\begin{align}{}
&\mathbf{P2}:  \mathop {\min }\limits_{{{{\mathcal{\tilde P}}}_S},\;{{{\mathcal{\tilde P}}}_{{{MDNC}}}},\;{\mathcal{U}}} \;\tilde V \nonumber\\
& s.t.\; \eqref{eq27},\; \eqref{eq63}, \; \eqref{eq631a}, \nonumber\\
&\quad\quad \log({{\Pr}_{out}}) \leq \log({{\Pr}_{out,target}}). \label{poutfinal}
\end{align}
According to \eqref{maxizero}-\eqref{tildeV}, the optimal solutions, i.e., ${{\mathcal{\tilde P}}_{S}}^*$, ${{\mathcal{\tilde P^*}}_{{MDNC}}}$ and ${\mathcal{U^*}}$ must satisfy
\begin{align}\min \{\tilde V\} =\log(M{\alpha_0}+q{T}{{\Delta _P}}\sum\limits_{j = 1}^N {{c_j}}).\end{align}

Finally, we have the following {proposition} for $\mathbf{P2}$.
\begin{prop}  \label{prob1}
With  given $q$, $\mathbf{P2}$ is a convex optimization problem with respect to ${\tilde p}_i$, $\forall i \in \{1,2,\cdots, M\}$ and ${{\tilde p'}_j}$ $\forall j \in {{{\Phi _{K}}}}$. Moreover, the existence of the $0-1$ variable, i.e., $u_j$, $\forall j$ makes it  also a  convex MINLP.
\end{prop}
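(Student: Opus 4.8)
The plan is to verify convexity by inspecting each term of $\mathbf{P2}$ after the change of variables in \eqref{eq489}, and then to observe that nothing in the argument required the integrality of the $u_j$, so that reinstating $u_j\in\{0,1\}$ yields a convex MINLP by definition. First I would treat the feasible set. Constraint \eqref{eq63}, written as $0<e^{\tilde p_i}\le P_{S,\max}$, is equivalent to $\tilde p_i\le\log P_{S,\max}$, an affine (hence convex) constraint; similarly \eqref{eq631a}, $0\le c_je^{\tilde p'_j}-c_j\le u_jP_{R,\max}$, splits into $\tilde p'_j\ge 0$ and $c_je^{\tilde p'_j}\le c_j+u_jP_{R,\max}$, the latter being a convex constraint in $(\tilde p'_j,u_j)$ since its left side is convex and its right side affine. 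For the energy budget \eqref{eq27}, I would substitute \eqref{eq15} and \eqref{eq489}: the terms in $E_{R,1}$, $E_{BS,1}$, $E_{BS,2}$ are affine in $\mathcal U$, the term $\sum_i p_iT=\sum_i Te^{\tilde p_i}$ is a sum of convex exponentials, and $\sum_j u_j({P_{0,R}}+\vartriangle_P p'_j)T=\sum_j T\big({P_{0,R}}u_j+\vartriangle_P(c_je^{\tilde p'_j}-c_j)\big)$ is affine-plus-convex; hence \eqref{eq27} defines a convex region.

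Next I would handle the outage constraint \eqref{poutfinal}. By Step 1, $\Pr_{out}$ is approximated by \eqref{eq222}, which is a sum of products of terms of the form $c_{ij}e^{-\tilde p_i}$ and $e^{-\tilde p'_j}$ with nonnegative coefficients; each such product is an exponential of an affine function of the $\tilde p$-variables, i.e.\ a posynomial/log-convex function, and a nonnegative sum of log-convex functions is log-convex (this is the standard GP fact cited via \cite{24}). Therefore $\log(\Pr_{out})$ is convex in $(\tilde{\mathcal P}_S,\tilde{\mathcal P}_{MDNC})$, and \eqref{poutfinal} asks it to be bounded by the constant $\log(\Pr_{out,target})$, a convex constraint.

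Then comes the objective $\tilde V=\log(V')$ with $V'=-V+M\alpha_0+qT\vartriangle_P\sum_j c_j$ from \eqref{Vprime}. Expanding $V=M\alpha_0(1-\Pr_{out})-qE_{tot}$ and using $\sum_j u_j\vartriangle_P p'_jT=\sum_j\vartriangle_P(c_je^{\tilde p'_j}-c_j)T$, the additive constant $qT\vartriangle_P\sum_j c_j$ is exactly chosen to cancel the $-qT\vartriangle_P\sum_j c_j$ produced by the substitution, so $V'=M\alpha_0\Pr_{out}+q\,(\text{affine and convex-exponential terms of }E_{tot})$ with all coefficients nonnegative for $q\ge 0$; each piece is log-convex (the $\Pr_{out}$ part by the GP argument, the exponential energy terms $e^{\tilde p_i}$, $e^{\tilde p'_j}$ trivially, the affine-in-$\mathcal U$ terms as nonnegative constants once $\mathcal U$ is fixed per outer iteration), so $V'$ is a nonnegative sum of log-convex functions, hence log-convex, i.e.\ $\tilde V=\log V'$ is convex. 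This is the step I expect to be the main obstacle: one must check that every term collected into $V'$ carries a nonnegative coefficient (so that positivity $V'>0$ from \eqref{eqrelation}$(a)$ and the log-convexity closure both apply) and that the cancellation of the $\vartriangle_P$-term is exact — a bookkeeping exercise that is easy to get wrong in signs. Finally, since the entire derivation treats $u_j$ as a continuous parameter and never used $u_j\in\{0,1\}$, relaxing the integrality gives a convex program in $(\tilde{\mathcal P}_S,\tilde{\mathcal P}_{MDNC},\mathcal U)$; reinstating $u_j\in\{0,1\}$ then makes $\mathbf{P2}$ a mixed-integer program whose continuous relaxation is convex, i.e.\ a convex MINLP, which is exactly the claim and is what makes the GOA algorithm of \cite{RFSL} applicable.
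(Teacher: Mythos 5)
Your treatment of the continuous variables is correct and is essentially the paper's own argument in expanded form: after the substitution \eqref{eq489}, the box constraints \eqref{eq63}--\eqref{eq631a} are affine/convex, and \eqref{eq27}, the approximated outage \eqref{eq222}, and $V'$ are positive sums of exponentials of affine functions (geometric-programming form), so their log-forms $\tilde V$ and \eqref{poutfinal} are convex by the standard log-convexity closure in \cite{24}. Your bookkeeping of the constant $qT\!\vartriangle_P\!\sum_j c_j$ in \eqref{Vprime} — added precisely to cancel the $-qT\!\vartriangle_P\!\sum_j c_j$ produced by $u_jp'_j=c_je^{\tilde p'_j}-c_j$, so that $V'$ stays a posynomial and stays positive — is a useful detail the paper leaves implicit. (Minor slip: \eqref{eq27} contains only the relay and BS energies, not $E_S=\sum_i p_iT$; this does not affect convexity.)

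The last paragraph, however, overreaches. You assert that "the entire derivation treats $u_j$ as a continuous parameter and never used $u_j\in\{0,1\}$," and conclude that the continuous relaxation is jointly convex in $({{\mathcal{\tilde P}}_S},{{\mathcal{\tilde P}}_{MDNC}},\mathcal U)$. This contradicts your own argument, which treated the $u$-dependent terms of $E_{tot}$ as "nonnegative constants once $\mathcal U$ is fixed" when establishing log-convexity of $V'$. And the joint claim is false in general for the log-objective: if $u$ enters $V'$ affinely, $\tilde V=\log V'$ need not be jointly convex — already $\frac{\partial^2}{\partial u^2}\log(e^{x}+u)=-(e^{x}+u)^{-2}<0$, so $\log(e^{x}+u)$ is concave in $u$. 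Moreover, the approximated $\Pr_{out}$ in \eqref{eq222} depends on $\mathcal U$ combinatorially, through the index sets $\Theta$, $\Phi_K$ and the summation limit $||\mathcal U||_1$, so it is not even a single smooth function of relaxed $u_j$ that one could test for joint convexity. The paper's proof claims, and the GOA scheme only requires, the weaker statement actually asserted in Proposition \ref{prob1}: for given $q$ and each fixed $0$-$1$ assignment $\mathcal U$, $\mathbf{P2}$ is convex in $({{\mathcal{\tilde P}}_S},{{\mathcal{\tilde P}}_{MDNC}})$ (this is what makes $\mathbf{PP}$ a convex program), while the integer variables are handled in $\mathbf{MP}$ via the linearizations \eqref{cons1}--\eqref{linearcons}. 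So either drop the joint-convexity claim or restate the "convex MINLP" conclusion in this per-fixed-$\mathcal U$ sense; as written, that step does not follow from your argument.
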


\begin{proof}
The proof can be found in Appendix $B$.
\end{proof}

\subsection * {Step 3: Iterative Algorithm for PA and RS}
 
{In Proposisition   
 \ref{prob1},  we state that our optimization problem has been converted into a convex  MINLP. Based on that, in this section, we seek the optimum solutions of $\mathbf{P2}$ with the GOA algorithm \cite{CAF}, which is proposed to efficiently solve the MINLP problem in \cite{CAF} and \cite{RFSL}. We use the optimum power allocation (PA) solutions obtained in the prior iterations to further optimize
relay selection (RS) in the sequent iterations. Specifically, we separate $\mathbf{P2}$ into two subproblems, a primal problem (PP) and a master problem (MP). PP focuses on obtaining ${{\mathcal{\tilde P}}_{S}}$ and ${{\mathcal{\tilde P}}_{{MDNC}}}$ with obtained $\mathcal{U}$ in the prior iteration, while the master problem aims at updating the $0-1$ combinations of $\mathcal{U}$ with previously obtained ${{\mathcal{\tilde P}}_{S}}$, ${{\mathcal{\tilde P}}_{{MDNC}}}$ and $\mathcal{U}$. The two optimization problems are solved iteratively.}

\subsubsection * {Primal problem in the $t$th iteration}
In the $t$th iteration, the primal problem of $\mathbf{P2}$ is formulated as
\begin{align}{}
&\mathbf{PP}:  \min\; \tilde V({q^{(\theta )}},{{{\mathcal{\tilde P}}}_{S}}^{},{{{\mathcal{\tilde P}}}^{}}_{{{MDNC}}},{{{\mathcal{U}}}^{(t - 1)}})  \nonumber\\
& s.t.\quad \eqref{eq27},\; \eqref{eq63}, \; \eqref{eq631a},\; \eqref{poutfinal}, \nonumber
\end{align}
where ${q^{(\theta )}}$ is the newly updated value of $q$ in its ${{\theta}}$th iteration; ${{{\mathcal{U}}}^{(t - 1)}}$ is the optimum 0-1 combinations of ${{{\mathcal{U}}}}$ in the $(t - 1)$th iteration.
{Proposition $1$ have revealed that $\mathbf{PP}$ is a standard convex optimization, which can be solved efficiently with the interior point method \cite{24}}.

Additionally, let $UBD_{}^{(t)}$ stand for the upper bound of ${\tilde V}$ provided by the primal problem within the first $t$ iterations. $UBD_{}^{(t)}$ is set as the minimum value of $\tilde V$ obtained in the current and the prior iterations, i.e.,
\begin{align}
UBD_{}^{(t)} =  \min\;\{{\tilde V}^{(m)}\}, \forall m \leq t.\label{UBD}
\end{align}

\subsubsection * {Master problem in the $t$th iteration}
{In the $t$th iteration, the master problem aims at updating the 0-1 combinations  in $\mathcal{U}^{(t)}$, which will be used in the $(t+1)$th primal problem iteration. Recall that the  optimal 0-1 combination corresponds to  the best RS strategy.}  We formulate it as
\begin{align}{}
&\mathbf{MP}:  \mathop {\min }\limits_{{{{\mathcal{\tilde P}}}_S},{{{\mathcal{\tilde P}}}_{MDNC}},{\mathcal{U}},w_{GOA}^{}} w_{GOA}  \nonumber\\
& s.t.\;\eqref{eq27},\; \eqref{eq63}, \; \eqref{eq631a}, \nonumber\\
&\quad\quad w_{GOA}^{} \ge \tilde V({q^{(\theta )}}, {\mathcal{\tilde P}}_S^{(m)},{{{\mathcal{\tilde P}}}^{(m)}}_{{{MDNC}}},{{{\mathcal{U}}}^{(m - 1)}}) \nonumber\\
&\quad\quad \quad\quad \quad\; + \nabla \tilde V{({\mathcal{\tilde P}}_S^{(m)},{{{\mathcal{\tilde P}}}^{(m)}}_{{{MDNC}}},{{{\mathcal{U}}}^{(m - 1)}})^T}\nonumber\\
&\quad\quad \quad\quad \quad\;\cdot \left( \begin{array}{l}
{{{\mathcal{\tilde P}}}_S} - {\mathcal{\tilde P}}_S^{(m)}\\
{{{\mathcal{\tilde P}}}_{MDNC}} - {\mathcal{\tilde P}}_{MDNC}^{(m)}\\
{\mathcal{U}} - {{{\mathcal{U}}}^{(m - 1)}}
\end{array} \right), \label{cons1}\\
&\quad\quad 0 \ge g(\tilde {\cal P}_S^{(m)} ,\tilde {\cal P}_{MDNC}^{(m)} )
 \nonumber\\
&\quad\quad \quad\quad \; + \nabla g(\tilde {\cal P}_S^{(m)},\tilde {\cal P}_{MDNC}^{(m)} )^T \nonumber\\
&\quad\quad \quad\quad\;\cdot \left( \begin{array}{l}
 \tilde {\cal P}_S  - \tilde {\cal P}_S^{(m)}  \\ 
 \tilde {\cal P}_{MDNC}  - \tilde {\cal P}_{MDNC}^{(m)}  \\ 
 \end{array} \right), \label{linearcons}\\
&\quad\quad UBD_{}^{(t)} > w_{GOA}^{}, \label{cons2}\\
&\quad\quad {||{\cal U}||_1^{(low)}}\leq ||{\mathcal{U}}|{|_1}\leq {||{\cal U}||_1^{(up)}}, \label{cons3}
\end{align}
where ${\mathcal{\tilde P}}_S^{(m)}$ and ${\mathcal{\tilde P}}_{MDNC}^{(m)}$ are the previously obtained optimal solutions of the primal problem in the $m$th ($m\leq t$) iteration; ${{{\mathcal{U}}}^{(m - 1)}}$ is the fixed 0-1 assignment that was obtained in the $(m-1)$th iteration. $\nabla \tilde V(\cdot)$ is the partial derivative of $\tilde V$. Take $\nabla \tilde V({\mathcal{\tilde P}}_S)$ as an example. It is the $M$-dimensional vector of the partial derivatives of $\tilde V$ with respect to ${\tilde p_1}$,  ${\tilde p_2}$, $\cdots$,  ${\tilde p_M}$. 
$g(\cdot)$ in \eqref{linearcons} describes the outage probability constraint, i.e., $g={{\Pr}_{out}}-{{\Pr}_{out,target}}$. \eqref{cons1} and \eqref{linearcons} respectively  linearise the objective  and the constraint around the obtained solution point (i.e., $\tilde {\cal P}_S^{(m)}$, $\tilde {\cal P}_{MDNC}^{(m)}$, $m=1, 2,\cdots, t$) of the primal problem, $\mathbf{PP}$. 

Additionally, we note that $\mathcal{U}^{(t)}$ obtained only with the constraints in \eqref{cons1}-\eqref{cons2} may make the next primal problem (i.e., $\mathbf{PP}(t+1)$) infeasible. That is, some constraints cannot be met \cite{CAF}. Thus, unnecessary computation is carried out if  the obtained $\mathcal{U}^{(t)}$ is utilized in the $(t+1)$th primal problem iteration. To avoid that, the feasible region for $||{\cal U}||_1$ is given as \eqref{cons3}. Specifically,  ${||{\cal U}||_1^{(low)}}$ stands for the minimum required number of relays to satisfy the outage probability constraint. ${||{\cal U}||_1^{(low)}}$ is determined by \eqref{Xout1} and \eqref{Xout2} as follows:%It is worth noticing that \eqref{eq32} is an auxiliary condition to accelerate the assignment of $\mathcal{U}^{(t)}$ and \eqref{cons3} is set motivated by the following fact.
\begin{align}
{{\Pr} _{out,mp}}(||{\cal U}||_1^{(low)}) &\leq {{\Pr} _{out,target}}, \label{Xout1}\hfill \\
{{\Pr} _{out,mp}}(||{\cal U}||_1^{(low)} - 1) &> {{\Pr} _{out,target}}, \label{Xout2}
\end{align}
where ${\Pr _{out,mp}}(||{\cal U}||_1^{(low)})$ is the outage probability when all the users and $||{\cal U}||_1^{(low)}$ relays transmit data with their maximum possible power. Further, let
$||{\cal U}||_1^{(up)}$ represent the upper bound number of relays, which is defined as
\begin{align}||{\cal U}||_1^{(up)}=\min (N, ||{\cal U}||_1^{(PC)}), \label{Xup}
\end{align}
where $||{\cal U}||_1^{(PC)}$ is the maximum number of relays that can be supported by the total available energy in the constraint \eqref{eq27}.

It can be observed that,  appart from \eqref{cons1} and \eqref{linearcons}, the other constraints in $\mathbf{MP}$, including \eqref{eq27}, \eqref{eq63}, \eqref{eq631a}, \eqref{cons2}  and \eqref{cons2}  are also (or can be easily converted into) linear inequations. {Thus, it can be claimed that $\mathbf{MP}$ is a mixed-integer linear problem (MILP), for which the branch-and-cut algorithm in the reliable solver such as Cplex can be applied  \cite{CAF}.}

Note that, as illustrated in \cite{CAF}, $w_{GOA}^{(t)}$ obtained in the $t$th iteration can be considered as a lower bound of $\tilde V$. We denote it as $LBD^{(t)}$.

In Algorithm $1$, we summarize the algorithm of RS and PA strategies.

\begin{figure}[h]
\centering
\includegraphics[width=0.5\textwidth]{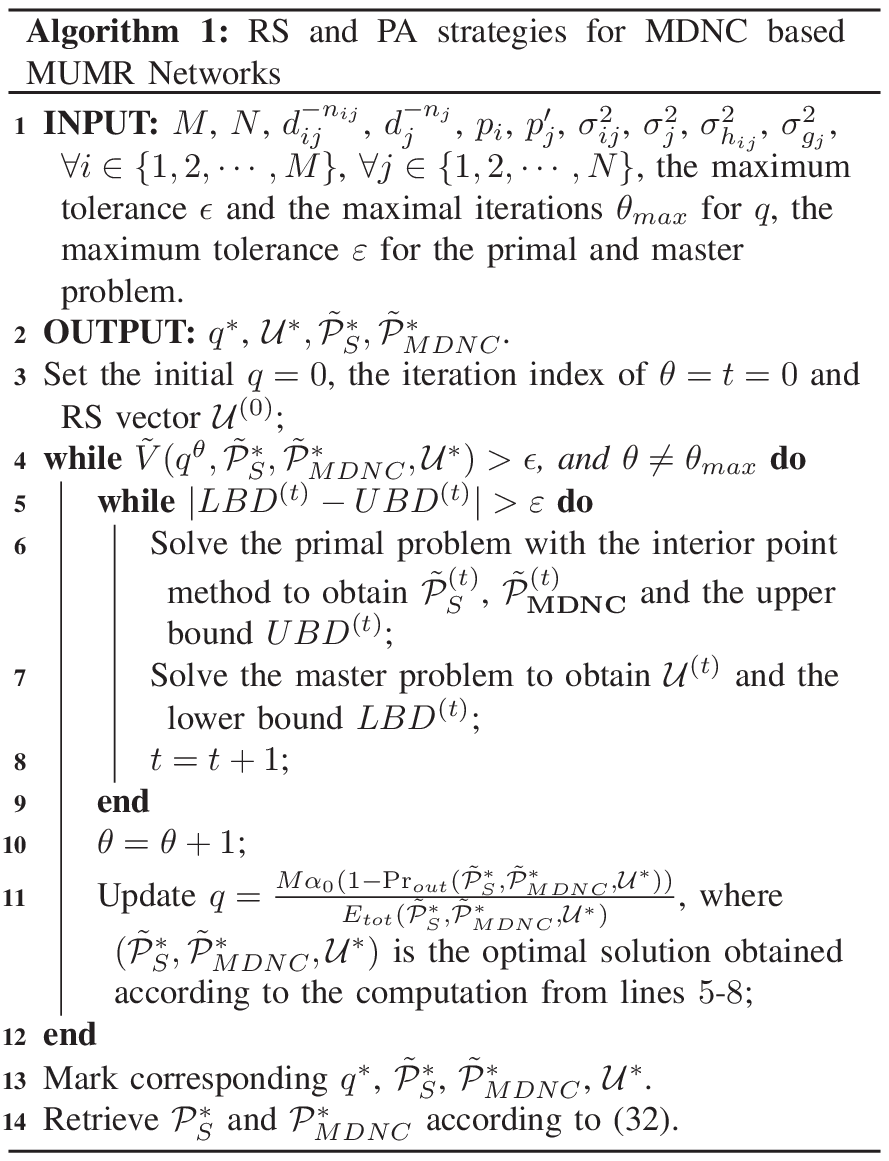}
\end{figure}

{\emph{Convergence Analysis:} We first analyze the monotonicity of the upper bound in the consecutive iterations. The primal subproblem (i.e., $\mathbf{PP}$) can be solved with the efficient algorithm with given ${{\mathcal{U}}}$. It provides an upper bound (i.e., $UBD$) of the original problem $\mathbf{P2}$. The upper bound is updated according to \eqref{UBD}, which guarantees that the sequence of the updated upper bound is non-increasing. For the lower bound (i.e., $LBD$), the constraint in \eqref{cons1} guarantees that $w_{GOA}^{(t)}\geq w_{GOA}^{(m)}$, $\forall m\leq t$. Thus, the lower bound sequence is non-decreasing. Considering the aforementioned analysis, it can be claimed that the GOA algorithm converges with finite iteration times.}

{To have an insight of the convergence rate, we further  give an analysis on the computational complexity in what follows}. 

{\emph{Complexity Analysis:} With our algorithm, PA and RS  were decomposed and respectively converted
into the standard convex $\mathbf{PP}$ and mixed-integer linear  
 $\mathbf{MP}$, which can  be
efficiently solved with interior point method and branch-and-cut
algorithm.} Specifically, in Dinkebach's method, the iteration time for $q$ is limited \cite{nonl}. Furthermore, with the interior-point method applied for $\mathbf{PP}$, the complexity will be $\mathcal{O}(\mathcal{C}_1\mathcal{C}_2)$, where $Z$ is the total number of exponential terms in the objective and constraints, $\mathcal{C}_1={(M + ||{\mathcal{U}}|{|_1} + Z + 1)^{1/2}}$, $\mathcal{C}_2=(M + ||{\mathcal{U}}||_1 + 1){{Z}^2} + {{Z}^3} + {(M + ||{\mathcal{U}}|{|_1})^3}$ \cite{interior_complexity}.  Obviously, $\mathcal{O}(\mathcal{C}_1\mathcal{C}_2)$ is a polynomial in $M$, $N$, and $||{\mathcal{U}}|{|_1}$. Due to the existence of the integer RS variables, the computational complexity of $\mathbf{MP}$ may be non-polynomial. However, efficient optimization solvers, e.g., Cplex, Mosek or Mskgpopt can be applied to solve the  master problem.

\section{Numerical Results}

In what follows, we will give numerical results based on above theoretical analyses and simulations. The following parameters are assumed in our system.
${\mathbf{{\sigma _h}}}$ and ${\mathbf{{\sigma _g}}}$ denote the variance {matrices} of the Rayleigh fading gains; ${\mathbf{{N_{0,h}}}}$ and  ${\mathbf{{N_{0,g}}}}$ represent the power spectrum density matrices respectively for S-R and R-BS channels, which are measured in Watts/Hz; ${\mathbf{{d _h}}}$ and ${\mathbf{{d _g}}}$ represent the distance matrices which are measured in meter; ${\mathbf{{n _h}}}$ and ${\mathbf{{n _g}}}$ represent the path-loss exponent {matrices}.  Specifically, elements at the $i$th row and the $j$th column of ${\mathbf{{\sigma _h}}}$, ${\mathbf{{d _h}}}$, ${\mathbf{{n _h}}}$ and ${\mathbf{{N_{0,h}}}}$ corresponding to the parameters for the channel between $U_i$ and  $R_j$. Elements at the $j$th columns in ${\mathbf{{\sigma _g}}}$, $\mathbf{{d _g}}$, ${\mathbf{{n _g}}}$ and ${\mathbf{{N_{0,g}}}}$ corresponding to the parameters for the channel between $R_j$ and the BS. The other system parameters  are shown in TABLE I. ${P_{0,BS}}$, ${P_{sleep,BS}}$, ${\Delta_P}$, $P_{sleep,R}$  and ${P_{0,R}}$ are also used in \cite{3}. The randomly generated values for all  above parameters are as follows.
\[\begin{gathered}
  {{\mathbf{\sigma }}_{\mathbf{h}}} = \left[ {\begin{array}{*{20}{c}}
  {5.1291}&{3.5040}&{4.3367}&{1.1597} \\
  {4.6048}&{0.9505}&{7.0924}&{0.7808}
\end{array}} \right], \hfill \\
  {{\mathbf{\sigma }}_{\mathbf{g}}} = \left[ {\begin{array}{*{20}{c}}
  { 5.0213}&{4.6821}&{3.4823}&{0.8667}
\end{array}} \right], \hfill \\
  {{\mathbf{d}}_{\mathbf{h}}} = \left[ {\begin{array}{*{20}{c}}
  {857.5}&{457.6}&{927.1}&{840.2} \\
  {1064.8}&{990.5}&{435.3}&{161.8}
\end{array}} \right],\hfill \\
 {{\mathbf{n}}_{\mathbf{h}}} = \left[ {\begin{array}{*{20}{c}}
  { 2.5570}&{2.9150 }&{2.3152}&{3.0143} \\
  {3.0938 }&{3.1298}&{2.6412}&{2.9708}
\end{array}} \right], \hfill \\
  {{\mathbf{d}}_{\mathbf{g}}} = \left[ {\begin{array}{*{20}{c}}
  {321.7}&{895.7}&{752.2}&{929.4}
\end{array}} \right],\hfill \\
{{\mathbf{n}}_{\mathbf{g}}} = \left[ {\begin{array}{*{20}{c}}
  {2.8006}&{2.2838}&{2.8435 }&{2.5315}
\end{array}} \right], \hfill \\
  {{\mathbf{N}}_{{\mathbf{0}},{\mathbf{h}}}} = 10^{-14}\left[ {\begin{array}{*{20}{c}}
  {0.063}&{0.035}&{0.27}&{0.003} \\
  {0.001}&{0.001}&{0.214}&{0.548}
\end{array}} \right],\hfill \\
  {{\mathbf{N}}_{{\mathbf{0}},{\mathbf{g}}}} = 10^{-14}\left[ {\begin{array}{*{20}{c}}
  {0.1900}&{0.3621}&{0.0132}&{0.0612}
\end{array}} \right].
\end{gathered} \]

\begin{table}[h]
\centering
 \caption{System Parameters}
\begin{tabular}{|c|c||c|c|}
\hline
$M$                                   & 2     & $N$    & 4\\
\hline
${P_{0,R}}$                                   & 56 watts     & ${P_{S,max}}$    & 10 watts  \\
\hline
$B$                  & 125K  Hz     & ${P_{R,max}}$    & 20 watts  \\
\hline
${P_{0,BS}} $                     & 130 watts       & ${P_{sleep,BS}}$ & 75 watts  \\
\hline
${\Delta_P}$                               & 2.6     & $P_{sleep,R}$     & 39 watts  \\
\hline
$\alpha_0$                               & 300K     & $|{X_{S,i}}|$     & 125K  \\
\hline
$\beta$                               & 0.1    & $E_0$      & $900$ Joules  \\
\hline
$||{\mathcal U}||_1^{(up)}$                               & 4    &      &  \\
\hline
\end{tabular}
 \label{tab11}
\end{table}

For comparison, we also draw the curves for the TDMA NoNC scenario.
The transmission process is similar to that in the MDNC scenario shown in Fig. \ref{transmission_NoNC}.  A sleeping period with length $\beta T$ seconds is also reserved for the relays. The difference from the NoNC scenario is that every selected relay will use $MT$ seconds instead of $T$ seconds to forward all the $M$ users' messages one by one. The power is fixed during the $MT$ seconds. An outage event happens for $U_i$, $\forall i$, when no relay successfully receives and forwards its message to the BS.

In the following, we first provide the relay selection results. After that, we illustrate the EE gains from RS, PA and  MDNC, respectively. In particular, the EE gains provided by dynamic switching between MDNC and NoNC when RS is implemented are also depicted. In the last, we show the impacts of circuit energy costs on RS and the impact of the relay locations on the EE.

\subsection{Relay Selection Results}
TABLE II shows the relay selection results for NoNC and MDNC scenarios. ``$\times$" marks the unselected relays.
We can see that though there are $4$ available relays, only parts of them are selected.  Note that with the same target outage probability, more selected relays are generally needed for the MDNC scenario than that in the NoNC scenario. %This is because  in the high SNR region the NoNC can achieve a diversity of $N=4$, while the MDNC achieves a diversity of $N-M+1=3$.
%Also, by computing the path-energy-cost factor according to  \eqref{eq35}, we have ${{\alpha}_{1,{MDNC}}}=0.071,$ ${{\alpha}_{2,{MDNC}}}=0.774$, ${{\alpha}_{3,{MDNC}}}= 0.083$ and ${{\alpha}_{4,{MDNC}}}=1.966.$ From both analytical and numerical aspects, relay $R_1$ has the highest priority of being selected. Relay $R_3$ ranks the second,  relay $R_2$ ranked the third, and relay $R_4$ ranks the last. The same relay priority order with that in MDNC scenario can be obtained for the NoNC scenario. The relay selection results from the Brute-force algorithm closely match with our analytical ones.

\begin{figure}[h]
\centering
\includegraphics[width=0.4\textwidth]{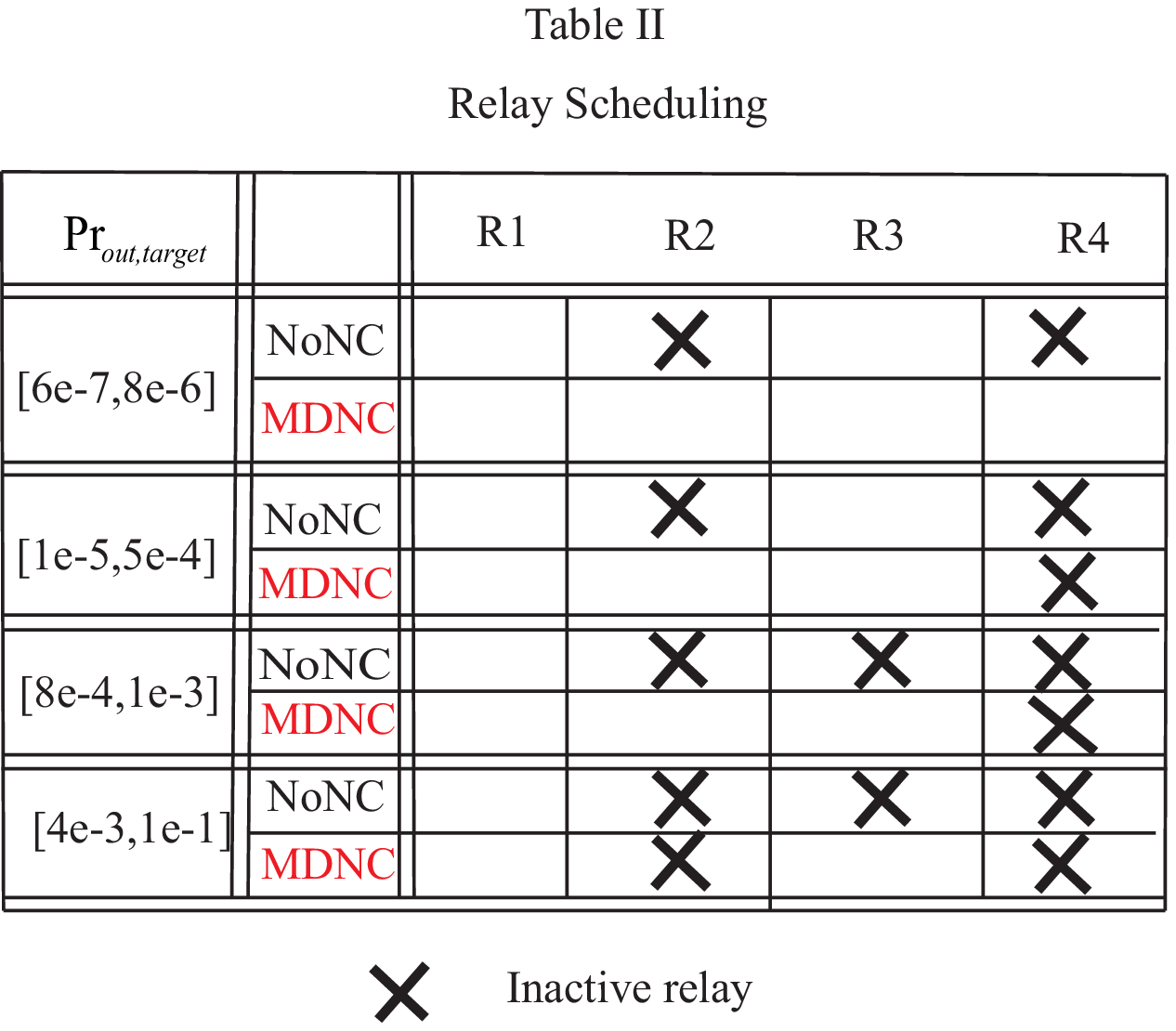}
\end{figure}
\subsection{Verification of approximated-optimal RS and PA}
Fig. \ref{fig7} demonstrates the Pareto-optimal tradeoff curves between  ${\eta _{EE}}$ and $\Pr_{out,target}$ for the MDNC and NoNC scenarios which are obtained by Algorithm $1$, the Brute-force algorithm and simulation, respectively. Pareto-optimal tradeoff curves consist of Pareto-optimal points, at which ${\eta _{EE}}$ cannot be increased without increasing $\Pr_{out,target}$ and vice versa. Moreover, the numerical results imply  that,  in the high SNR region, the Pareto-optimal ${\eta _{EE}}$ is obtained when $\Pr_{out}=\Pr_{out,target}$.  The simulation results are obtained by respectively averaging the outage probability, total consumed energy and EE over $10^8$ random realizations of the Rayleigh fading channels. It can be observed that the results from Brute-force, Algorithm $1$ and the simulations are closely matched. All these show that the analytical results obtained by Algorithm $1$ are valid.

\subsection{EE Gains from RS}
\subsubsection{EE Gains from RS for the MDNC scenario}

For comparison, we  plot ``MDNC+PA+RS" (MDNC with PA and RS), $4$-relay ``MDNC+PA" (MDNC with PA) and $3$-relay ``MDNC+PA" curves in Fig. \ref{fig7}. Due to the relay selection, jumping points appear in the curve for the scenario with RS. We can see from Table II that: at these jumping points, the number of selected relays varies from $2$ to $3$ and then $4$  when ${\Pr_{out}}$ decreases. Correspondingly, the electrical circuit consumption at the relays sharply increases. This results in the abrupt decrease in the EE.

Specifically, the curves of ``MDNC+PA+RS"  and $4$-relay ``MDNC+PA"  overlap when  $5\times10^{-6} \leq \Pr_{out} \leq 10^{-5}$. This is because in that region, both cases need $4$ relays  to forward messages. Meanwhile,  ``MDNC+PA+RS" outperforms $4$-relay ``MDNC+PA" when $10^{-5} \leq \Pr_{out} \leq  10^{-2}$. To be specific, more than $40\%$ of EE gain is achieved by RS.

For $3$-relay ``MDNC+PA", we note the lower bound of $\Pr_{out}$ is  $10^{-5}$. Correspondingly, $3$-relay ``MDNC+PA" curve ends when $\Pr_{out}=10^{-5}$. It overlaps with ``MDNC+PA+RS" when $10^{-5} \leq \Pr_{out} \leq 10^{-3}$ and both cases need $3$ relays.

{When $ \Pr_{out}\geq 4\times10^{-3}$, only two relays are selected in the ``MDNC+PA+RS". It is worth noticing that $2$-relay ``MDNC+PA" curve overlaps with that of $2$-relay ``MDNC+PA+RS" when $\Pr_{out} \geq 4\times 10^{-3}$. The outage probability beyond that region is not achievable with $2$-relay ``MDNC+PA+RS" or $2$-relay ``MDNC+PA" scheme.} Correspondingly, more than $28\%$ of EE gain is achieved by  ``MDNC+PA+RS"  compared with  $3$-relay ``MDNC+PA". {We note that, for $1$-relay scenario, network coding is not needed since the relay is the only path for the source to reach the destination. }

{We also draw the tradeoff curves when relays decoding a part of source messages are allowed to forward the corresponding codewords. As we can see, a higher EE can be obtained compared with our relay forwarding rule. But the extra EE gains are small.}

%\begin{figure}[h]
%\centering
%\includegraphics[width=0.65\textwidth]{Tradeoff_Etot_Pout_568_20141007.eps}
%\caption{MDNC scenario, approximated energy used for data transmission versus ${\Pr}_{out}$}  \label{fig8}
%\end{figure}

\begin{figure}
\centering
\includegraphics[width=0.55\textwidth]{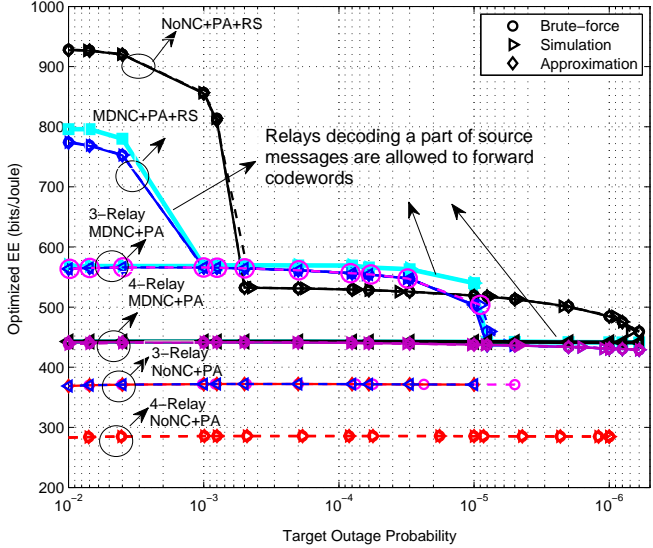}
\caption{Tradeoff between outage probability and EE for MDNC and NoNC scenarios.}  \label{fig7}
\end{figure}

\begin{figure}
\centering
\includegraphics[width=0.52\textwidth]{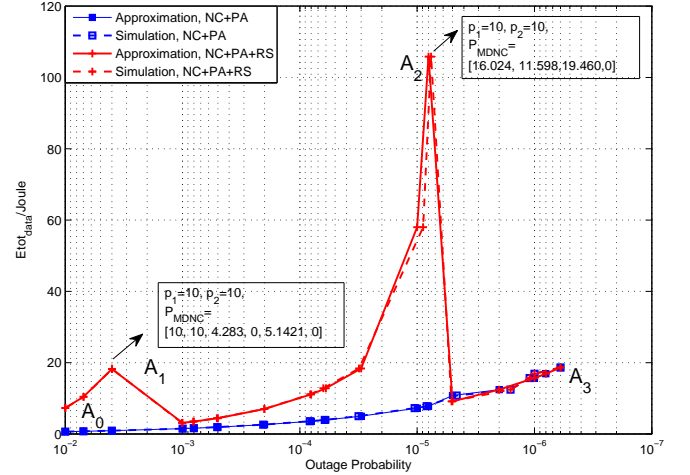}
\caption{$Etot_{data}$ versus ${\Pr}_{out}$ for MDNC scenario.}\label{fig8}
\end{figure}
To illustrate the impacts of $\Pr_{out}$ on the RS and PA, we plot the curves of total consumed energy used for the data transmission, denoted as $Etot_{data}$, versus $\Pr_{out}$ in Fig. \ref{fig8}. The curve for ``MDNC+PA+RS"  fluctuates and two peak points $A_1$ and $A_2$ appear. They correspond to the two jumping points in Fig. \ref{fig7}. We can see that at these two points, the power allocated to the users both reaches the upper limit, i.e. $10$ Watts. For the curve segment connecting $A_0$ and $A_1$, only $R_1$ and $R_3$ are selected. When $\Pr_{out}$ decreases, the power allocated to the users and relays becomes larger and reaches the upper bound at $A_1$. To further decrease $\Pr_{out}$, another relay, $R_2$, is selected to increase the diversity order. Consequently, $Etot_{data}$  decreases. When the power allocated to the users achieves the upper bound again at point $A_2$, the relay with the lowest priority, i.e., $R_4$, is selected to further increase the diversity order and decrease $\Pr_{out}$. Consequently, data transmitting power is diminished.
\subsubsection{EE Gains from RS for the NoNC scenario}
%For comparison, we also draw the curves for the NoNC scenario in Figures \ref{fig9}$-$\ref{fig11}. In Fig. \ref{fig9}, the approximated analytical results, the Brute-force results and simulation results for both of the former two are tightly matched. In Fig. \ref{fig10} and Fig. \ref{fig11}, we will just show the system performance based on the analytical results.
% Fig.~\ref{fig4} shows that for the scenario with network coding, PA and RS always bring to the outage probability performance.

Due to the relay selection, jumping points appear in the curve for the NoNC scenario with RS. We can see from Table II that at these jumping points, the number of selected relays varies from $1$ to $2$ when $\Pr_{out,target}$ decreases.

In Fig. \ref{fig7}, by comparing ``NoNC+PA+RS" (NoNC scheme with PA and RS) with  $4$-Relay ``NoNC+PA" (NoNC scheme with PA), we can see that at the same outage probability level, more than $80\%$ of EE gains can be obtained from RS. The gap between the  ``NoNC+PA+RS"   and $3$-Relay ``NoNC+PA" implies that around $40\%$ of EE gains can be achieved by RS.
%\begin{figure}[h]
%\centering
%\includegraphics[width=1.1\textwidth]{Simu_Analy_Tradeoff_EE_Pout_347_20141007.eps}
%\vspace{-0.5cm}
%\caption{Simulation and analytical tradeoff between outage probability and EE for NoNC scenario}  \label{fig9}
%\vspace{-0.5cm}
%\end{figure}

We also plot the curve of  $Etot_{data}$ versus $\Pr_{out}$ for ``NoNC+PA+RS" in Fig. \ref{fig11}. The curve for ``NoNC+PA+RS"  fluctuates and there is also one jumping point. It corresponds to the jump point in Fig. \ref{fig7}. For the curve connecting $B_0$ and $B_1$, only $R_1$ is selected. At $B_1$, the power allocated to the first user reaches its upper limit, i.e. $10$ Watts. To further decrease $\Pr_{out}$, another relay, $R_3$, is selected to increase the diversity order. Consequently, the total data transmitting powers can be diminished.

\begin{figure}
\centering
\includegraphics[width=0.56\textwidth]{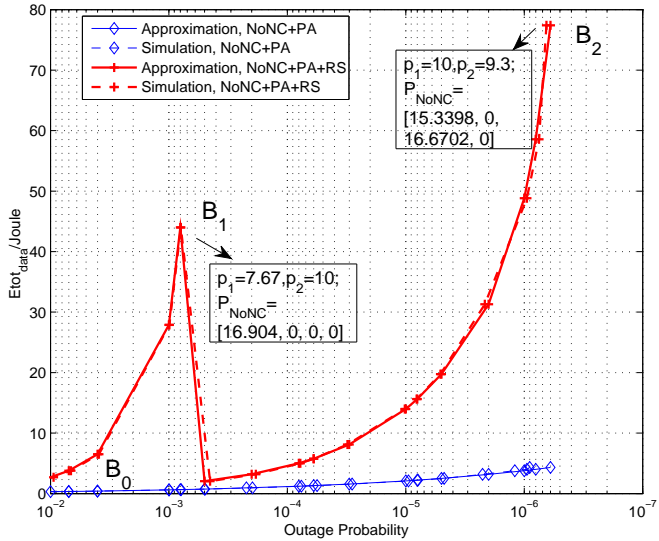}
\caption{$Etot_{data}$ versus ${\Pr}_{out}$ for NoNC scenario.}  \label{fig11}
\end{figure}

\subsection{EE Gains from Dynamic Switching between MDNC and NoNC}
By comparing $4$-relay ``NoNC+PA" and $4$-relay  ``MDNC+PA" curves in  Fig. \ref{fig7}, we can see that  MDNC outperforms NoNC. We can achieve the same conclusion by comparing between $3$-relay ``NoNC+PA" and $3$-relay  ``MDNC+PA" scenarios. That is, with the same number of relays being selected, MDNC can lead to EE gains. { Additionally, when RS is implemented, we can obtain that MDNC can outperform the NoNC scheme in some target outage probability region, where the numbers of selected relays for MDNC and NoNC are the same or very close.}   {Specifically, in the region $\Pr_{out}\in (5\times10^{-4}, 3\times10^{-3})$,  $2$ and $3$ relays are respectively needed for ``NoNC+PA+RS"  and ``MDNC+PA+RS".    As can be seen, ``MDNC+PA+RS" achieves a higher EE, which  implies the EE gains resulting from network coding.  The reason is given as follows. We note that the total consumed energy consists of two parts, the electrical circuit energy and the data transmission energy.  Though the MDNC scheme consumes  more energy  in \emph{electrical devices and circuits} due to one extra activated relay (i.e., $R_2$), less \emph{data transmitting energy} is consumed as shown in Fig.~\ref{fig8} and Fig.~\ref{fig11}, such that the total consumed energy is less than that of the NoNC scenario. Thus, higher EE is obtained in the MDNC scenario.} On the other hand, in other outage probability regions, NoNC can perform better than the MDNC scheme if much fewer relays are needed for the NoNC scenario. In the region $\Pr_{out}\in (6\times10^{-7}, 5\times10^{-6})$ where $2$ and $4$ relays are respectively needed for ``NoNC+PA+RS"  and ``MDNC+PA+RS", ``NoNC+PA+RS" outperforms ``MDNC+PA+RS". This is because  electrical devices consume more energy to support another $2$ relays in the network coding scenario. Thus, a dynamic policy at the relays, i.e., switching between NoNC and MDNC, based on the target outage probability can potentially improve the EE.

%Correspondingly, MDNC scenario  achieves a diversity order of $2$ (recall that the diversity order for MDNC scenario is $(||\mathcal U||_1-M+1)$ \cite{19}), while the diversity  of ``NoNC+PA+RS" is $1.5$ (for comparison fairness in time, only $1.5$ relays forward one source message). Herein, a higher diversity order is obtained in the MDNC scheme.

\subsection{EE Gains from PA}
To clearly show the energy saving effect from PA, we plot the curves of $Etot_{data}$ versus $\Pr_{out}$ for both $3$ and $4$-relay scenarios in Fig. \ref{MDNC3relays4relays}. The $Etot_{data}$  gap between the curves for $4$-relay ``MDNC+PA" and $4$-relay  ``MDNC" (MDNC without PA or RS) in Fig. \ref{MDNC3relays4relays}  shows the energy saved with PA. Similar conclusions can also be obtained from $3$-relay ``MDNC+PA" and $3$-relay  ``MDNC". It is clear that for the same $\Pr_{out}$, more than $30\%$ of energy is saved. %Similarly, in Fig. \ref{NoNC3relays4relays}, the gap between the ``NoNC+PA" and ``NoNC" (NoNC without PA or RS) curves shows the energies saved by PA. Specifically, more than $25\%$ of energy is saved by PA.
\begin{figure}
\centering
\includegraphics[width=0.56\textwidth]{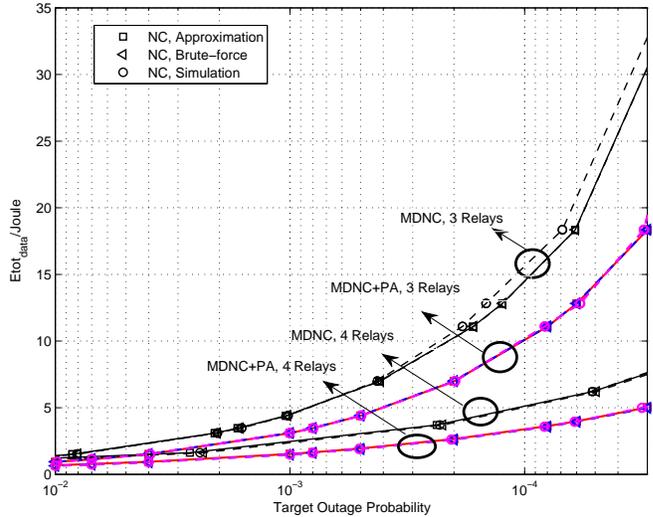}
\caption{$Etot_{data}$ versus the outage probability  for MDNC  scenario.}  \label{MDNC3relays4relays}
\end{figure}

%\begin{figure}
%\centering
%\includegraphics[width=0.56\textwidth]{EtotPoutNoNC3relays4relays.eps}
%\caption{$Etot_{data}$ versus the outage probability for NoNC  scenario.}  \label{NoNC3relays4relays}
%\end{figure}
\subsection{Effect of Circuit Energy Costs on RS}

From Fig. \ref{MDNC3relays4relays}, we can  conclude that if the circuit energy costs are ignored, the EE of four relays is larger than that of three relays. This is due to the fact that with more relays assisting in forwarding messages, a higher diversity order can be  provided. Contrarily, if we take the  electrical consumption into account, more relays may result in a decreasing EE,  as shown in Fig. \ref{fig7}. This is because that the outage probability gains achieved by adding extra relays are at the cost of basic electrical  circuits consumption.  Thus, we cannot ignore the circuit energy consumption. Based on that knowledge, the relay scheduling, including the modes (active, sleeping or off) and the number of relays, are significant.

\subsection{Effects of the Relay Locations}
{To investigate the effect of the relay locations on the EE, we move the relays  farther away from/closer to the users and closer to/farther away from the BS.} To be specific, the distance between $U_i$, $\forall i$,  and $R_j$, $\forall j$, is changed into $({d_{{ij}}}+\Delta)$ meters while the distance between $R_j$ and the BS becomes $({d_{j}}-\Delta)$ meters, where $\Delta$ can be regarded as the shifting distance of one relay.
\begin{figure}
\centering
\includegraphics[width=0.55\textwidth]{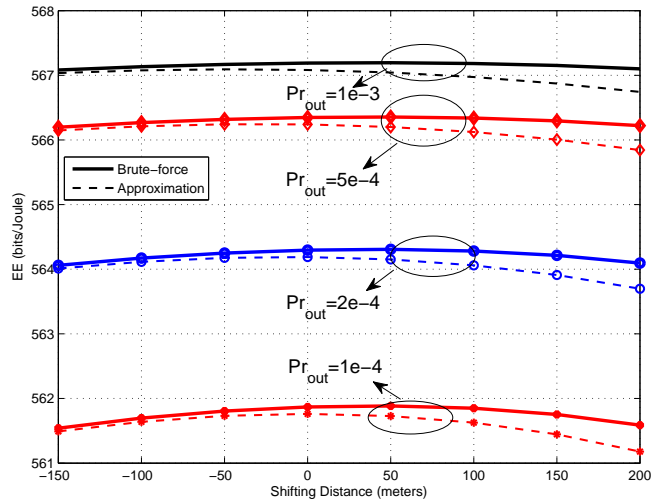}
\caption{3-Relay ``MDNC+PA"  scenario; shifting distance versus EE.}  \label{relaylocationeffect}
\end{figure}
As can be seen from Table II, $3$ relays are selected to optimize the EE when $1\times 10^{-5}\leq \Pr_{out, target} \leq 10^{-3}$.  In Fig. \ref{relaylocationeffect}, we draw the EE curves versus the shifting distance for $3$-Relay ``MDNC+PA" scenario when $\Pr_{out}=10^{-3}$, $5\times 10^{-4}$, $2\times 10^{-4}$ and $1\times 10^{-4}$, respectively. It can be found that the optimum value of  $\Delta$ with respect to the EE is $50$. For $50\leq \Delta \leq 200$, increasing $\Delta$ causes a larger path loss in the source-relay channels. Note that the first hop transmission is dominant in the two-hop transmission scheme. The  increasing path loss in  the first hop deteriorates the outage performance, which needs more energies in the second hop to compensate. On the other hand, for $-150\leq \Delta \leq 50$, an decreasing $\Delta$ yields a decline in the EE. This is because when the relays get closer to the destination node, the path loss of the relay-BS channel becomes larger. In this case, larger transmitting power is needed at relays and EE correspondingly decreases.

%Additionally, it can be noticed that with PA, higher EE gains are achieved for the larger $\Delta$ cases. Specifically, $17\%$,  $40\%$ and $85\%$ of EE gains are separately obtained for $\Delta=0$, $150$ and $300$ meters scenarios. From another perspective, the gaps among the tradeoff curves for the three  different shifting distances cases with PA are considerably smaller than those with the uniform power cases. It demonstrates that the EE loss resulting from the larger $\Delta$ can be significantly reduced with our proposed PA policy.

\section{Conclusions}

The energy efficiency of MDNC-based MUMR networks without CSIT has been studied. To formulate the EE maximization problem under constraints of the outage probability,  the exact outage probability expression was first derived but shown to be not tractable mathematically. Motivated by this, the outage probability was tightly approximated by its log-convex form.  The method  can be extended into various networks, such as  NoNC, TWR and ANC, \emph{etc}. Then we eventually transformed our original problem into  convex mixed-integer nonlinear one, for which a generalized outer approximation (GOA) algorithm was applied to decompose RS and PA such that they can be solved in an iterative manner. {In particular, PA and RS were respectively converted into a  standard convex and mixed-integer linear problem and  efficiently solved with interior point method and branch-and-cut algorithm}. Analytical results showed that the proposed PA and RS strategies can improve the EE. Further, when the number of selected relays was the same for MDNC and NoNC, it turned out that MDNC outperformed NoNC in terms of the EE. However, with RS, the NoNC scenarios may outperform the MDNC scenario in some outage probability region, where fewer selected relays were needed for the NoNC scenario and thus the circuit energy costs were lowered. Additionally, we have shown that transmitting power with more relays can be decreased, since with more relays assisting in forwarding messages, larger diversity was obtained. Moreover, the optimum relay positions with respecting to the EE were found at a  specific outage probability. It was demonstrated that a further increase/decrease in the transmission distance of the first/second hop both deteriorated the EE performance.
%the EE was transformed into a subtractive-form based on the fractional programming theory. With these techniques,
%We obtained and analyzed the Pareto-optimal tradeoff curves between the EE and target outage probability.

 %In particularly, for RS, a new path-energy-cost factor has been proposed as a relay priority parameter for the networks with heterogeneous channels. Note that the parameter was a generalized criterion and can be applied in various multi-hop transmission models without CSIT, such as TWR, ANC, Orthogonal Frequency Division Multiplexing (OFDM)-relaying networks, Multi-input and Multi-output (MIMO)-realying networks, etc.

\section*{Appendix}

\subsection{Proof of Lemma \ref{Theorem1}}\label{lemmaproof}
\begin{proof}
In \cite{nonl}, the following conclusion has been obtained.
Let $N({x})$ and $D({x})$ be continuous and real-valued functions. Suppose $D({x})>0$. Then it is proved that
\begin{align} {q^ *} = N({x^ *})/D({x^ *}) = \max \{ N(x)/D(x)\},\end{align}

if and only if,

\begin{align}
F( q^ *) =F( q^ *, x^ *)= \max \{N(x) - { q^ * }D(x)\} = 0. \label{non}
\end{align}

{We note that if $\mathcal{U}$ is fixed, both $M{\alpha_0}(1-{\Pr}_{out})$ and ${E_{tot}}$ are continuous and real-valued functions and ${E_{tot}}>0$.  For $\mathbf{P1}$, by replacing $x^ *$, $N({x})$ and $D(x)$ in \eqref{non} with $[{{\mathbf{\tilde P}}_{\mathbf{s}}}^*,{{\mathbf{\tilde P^*}}_{\mathbf{MDNC}}}]$, $M{\alpha_0}(1-{\Pr}_{out})$ and ${E_{tot}}$, respectively, then Lemma $1$ can be obtained.} \quad \end{proof}

\subsection{Proof of Proposition \ref{prob1}}
It is obvious that  \eqref{eq63} and \eqref{eq631a} are convex. With fixed ${\mathcal U}$, \eqref{eq27}, \eqref{eq222} and $E_{tot}$ are the sum of multiple exponential terms multiplied by positive constants, which are in the geometric programming form. Then their log-form i.e., $\tilde V$ and \eqref{poutfinal} are convex \cite{24}. Thus, $\mathbf{P2}$ is convex. The efficient interior-point method \cite{24} can be applied to obtain the solution.\qed

%%%%%%%%%%%%%%%%%%%%%%%%%%%%%%%%%%%%%%%%%%%%%%%%%%%%%%%%%%%%%%%%%%%%%%%

\balance

\end{document}